%%%%%%%%%%%%%%%%%%%%%%%%%%%%%%%%%%%%%%%%%%%%%%%%%%%%%%%%%%%%%%%%%%%%%%%%%%%%%%%%
%2345678901234567890123456789012345678901234567890123456789012345678901234567890
%        1         2         3         4         5         6         7         8

\documentclass[letterpaper, 10 pt, conference]{ieeeconf}  % Comment this line out if you need a4paper

\IEEEoverridecommandlockouts                              % This command is only needed if 
                                                          % you want to use the \thanks command

\overrideIEEEmargins                                      % Needed to meet printer requirements.

% See the \addtolength command later in the file to balance the column lengths
% on the last page of the document

% The following packages can be found on http:\\www.ctan.org
%\usepackage{graphics} % for pdf, bitmapped graphics files
%\usepackage{epsfig} % for postscript graphics files
%\usepackage{mathptmx} % assumes new font selection scheme installed
%\usepackage{times} % assumes new font selection scheme installed
%\usepackage{amsmath} % assumes amsmath package installed
%\usepackage{amssymb}  % assumes amsmath package installed
\usepackage{multicol}
\usepackage{algorithmic}
\usepackage{algorithm}
\usepackage{arydshln,leftidx,mathtools}
\usepackage{amsmath}
\usepackage{amssymb}
\usepackage[T1]{fontenc}
\usepackage[latin1]{inputenc} 
\usepackage{color}
\usepackage{graphicx}
\usepackage{wrapfig}
\usepackage{color}
\usepackage{lipsum}
\usepackage[usenames,dvipsnames]{xcolor}
\graphicspath{ {./images/} }
%\usepackage{lineno}
%\linenumbers
%\renewcommand\linenumberfont{\color{red}\tiny}

%%great
%\usepackage{lineno,xcolor}
%% Running line numbers:
%\linenumbers
%\setlength\linenumbersep{3pt}
%\renewcommand\linenumberfont{\normalfont\tiny\sffamily\color{gray}}

%\usepackage{natbib} 
%\usepackage[font={small}]{caption}
%\DeclareCaptionFont{tiny}{\tiny}
%%%%%%%%%%%%%%%%%%%%%%%%%%%%%%%%%%%%%%%%%%%%%%%%%%%%%%%%%%%%%%%%%
%%%%%%%%%%%%%%%%%%%%%%%%%%%%%%%%%%%%%%%%% New Commands %%%%%%%%%%
%%%%%%%%%%%%%%%%%%%%%%%%%%%%%%%%%%%%%%%%%%%%%%%%%%%%%%%%%%%%%%%%%
\newcommand{\MORPH}{\mathcal{T}}
\newcommand{\QNUM}{n_\mathrm{p}}

\newcommand{\IM}{\mathrm{Im}}
\newcommand{\ALPV}{ALPV\ }
\newcommand{\LFPV}{LPV-LFR\ }
\newcommand{\ALPVs}{ALPVs\ }

\newcommand{\Rank}{\mathrm{rank}}

\newtheorem{Theorem}{\textbf{Theorem}}

\newtheorem{Corollary}{\textbf{Corollary}}

\newtheorem{Definition}{\textbf{Definition}}
\newtheorem{Remark}{\textbf{Remark}}
\newtheorem{Notation}{\textbf{Notation}}
\newtheorem{Example}{\textbf{Example}}

\newcommand{\NX}{n_\mathrm{x}}
\newcommand{\NY}{n_\mathrm{y}}
\newcommand{\NU}{n_\mathrm{u}}
\newcommand{\NP}{n_\mathrm{p}}
\newcommand{\NTH}{n_\mathrm{\theta}}

\newcommand{\INP}{\{0,\ldots, n_\mathrm{p}\}}

\newcommand{\rank}{\mathrm{rank}}

\newcommand{\diag}[1]{%
	{\textup{diag}}{#1}}

%colored text

%%%%%%%%%%%%%%%%%%%%%%%%%%%%%%%%%%%%%%%%%%%%%%%%%%%%%%%%%%%%%%%%%%%%
%%%%%%%%%%%%%%%%%%%%%%%%%%%%%%%%%%%%%%%%% Document Body %%%%%%%%%%%%
%%%%%%%%%%%%%%%%%%%%%%%%%%%%%%%%%%%%%%%%%%%%%%%%%%%%%%%%%%%%%%%%%%%%
\title{\LARGE \bf
   Structural properties of LPV to LFR transformation: minimality, input-output behavior and identifiability.
}

\author{Ziad Alkhoury$^{1,2}$ and Mih\'aly Petreczky$^{3}$ and Guillaume Mercère$^{1}$% <-this % stops a space
\thanks{$^{1}$
	Ziad Alkhoury and Guillaume Mercère are with University of Poitiers, Laboratoire d'Informatique et d'Automatique pour les Systèmes, Bâtiment B25, 2 rue Pierre Brousse, TSA 41105, 86073 Poitiers Cedex 9. {\tt\small ziad.alkhoury@univ-poitiers.fr, guillaume.mercere@univ-poitiers.fr}}%
\thanks{$^{2}$
	Ziad Alkhoury is also with Ecole des Mines de Douai, F-59500 Douai, France.}%
\thanks{$^{3}$Mih\'aly Petreczky is with the CNRS, Centrale Lille, UMR 9189 - CRIStAL-Centre de Recherche en Informatique, Signal et Automatique de Lille, F-59000 Lille, France, {\tt\small mihaly.petreczky@ec-lille.fr}}%
\thanks{*This work was partially supported by ESTIREZ project of Region Nord-Pas de Calais, France.}% <-this % stops a space
}

\begin{document}

\maketitle
\thispagestyle{empty}
\pagestyle{empty}

%%%%%%%%%%%%%%%%%%%%%%%%%%%%%%%%%%%%%%%%%%%%%%%%%%%%%%%%%%%%%%%%%%%%%%%%%%%%%%%%
\begin{abstract}
	In this paper, we introduce and study important properties of the transformation of Affine Linear Parameter-Varying (ALPV) state-space representations into Linear Fractional Representations (LFR). More precisely, we show that $(i)$ state minimal ALPV representations yield minimal LFRs, and vice versa, $(ii)$ the input-output behavior of the ALPV represention determines uniquely the input-output behavior of the resulting LFR, $(iii)$ structurally identifiable ALPVs yield structurally identifiable LFRs, and vice versa. We then characterize LFR models which correspond to equivalent ALPV models based on their input-output maps. As illustrated all along the paper, these results have important  consequences for identification and control of systems described by LFRs.
\end{abstract}

%%%%%%%%%%%%%%%%%%%%%%%%%%%%%%%%%%%%%%%%%%%%%%%%%%%%%%%%%%%%%%%%%%%%%%%%%%%%%%%%
\section{INTRODUCTION}

Linear Fractional Transformation (LFT) is one of the main tools used in the past decades for studying uncertain systems (see, \emph{e.g.}, \cite{Zhou1996}). For instance, Linear Fractional Representations (LFRs) have been widely used in control synthesis of Liner Parameter Varying (LPV) systems or for $H_\infty$ optimal control (see \cite{Zhou1996,Rugh2000,BriatBook} for overview). More recently, the LFRs have attracted a lot of attention as far as system identification is concerned, (see \cite{LP96, LP97, LP99, CL08, Vizer2013b}). For LPV model-based controller design, several solutions first consist in transforming the LPV system into an LFR. This step indeed allows us to use control tools developed for LFRs to design a controller to guarantee the satisfactory closed loop operation of the LPV plant in many operating conditions. As far as system identification is considered, it is clear from the literature (see, among others, \cite{Tot10, Lal11, Lov14}) that the branch of data-driven modeling dedicated to LPV model identification is a lot more mature than the one dealing directly with LFRs. These observations mean that, in control design as well as in system identification, LPV models are often used as an intermediary representation, whose main purpose is to serve as a source for an LFT description. It is thus of prime interest in control in general to study the transformation of LPV to LFR closely. 

In this paper, a specific attention is paid to important realization theory concepts like minimality and input-output equivalence of model representations. The reason why this last point (\emph{i.e.}, input-output equivalence of LPV models and LFRs) is a crucial and a challenging problem in system identification and controller design can be illustrated as follows. Input-output equivalence of two LPV models means that these two models yield the same outputs for the all inputs and scheduling signals. Input-output equivalence of the corresponding LFRs means that they both yield the same outputs for the all input and \textbf{all} the choice of uncertainty block $\Delta$. The latter point (\emph{i.e.}, for all $\Delta$) leads us to the conclusion that the LFRs should behave the same way for $\Delta$ blocks which do not arise from scheduling variables. The uncertainty operator in $\Delta$ could be, for instance, any stable non-rational transfer function.  In fact, we can not even conclude that two LFRs which arise from two input-output equivalent LPV models can be interconnected\footnote{Note that the interconnection of an LFR with a block $\Delta$ need not always be well-defined.} with the same uncertainty block $\Delta$. This observation is not an issue if the LPV model is known from first principles. However, if the LPV model is identified from data, leading to a black-box model, then different identification methods applied to the same measurements may yield different LPV models which are, at most, input-output equivalent. By keeping in mind that the existence of a controller for an LFR only depends on its input-output behavior (while the disturbances block $\Delta$ is assumed to have a bounded norm), the situation described above means that the outcome of controller synthesis may depend on the choice of the identification method, even under ideal conditions. This is clearly an undesirable situation.

%\red{Let us further illustrate why it is an important and a challenging problem to study the LPV to LFR transformation. Assuming the availability of a reliable and sufficiently large data-sets, it is now well-known}

This simple illustration of system identification for control clearly points out the fact that we need to understand the relationship between the input-output behavior of LPV models and the corresponding LFRs. Indeed, the measurements allow us to say something about the input-output behavior of the underlying LFR for those choices of the uncertainty block which correspond to scheduling variables. However, it is not \emph{a priori} clear that this information is sufficient to determine the input-output behavior of the LFR for all
other choices of the uncertainty structure.

In this paper, we first show that the transformation from ALPV models to LFRs preserves minimality. This enables
us to show in a second step that the input-output behavior of an ALPV model uniquely determines the input-output behavior of the corresponding LFR. Indeed, from \cite{Alkhoury15}, it follows that minimal ALPV models with the same input-output behavior are related by a constant state isomorphism. We then show that that the classical transformation from ALPV models to LFR models preserves isomorphism. Hence, not only input-output equivalent ALPV models yield input-output equivalent LFRs, but minimal and input-output equivalent LPV models yield isomorphic LFRs. This result has an interesting consequence as far as controller design is concerned. If attention is restricted to minimal models, then control synthesis does not depend on which representative of the class of input-output equivalent ALPV we have picked to design the controller. We can thus conjecture that there is no advantage in using
non-minimal ALPV or LFT models for control synthesis. This is the case for LTI systems. The proof of this conjecture remains future work though. 

%Minimality of LFR models is already related to the realization theory of Formal Power Series representation (FPS) \cite{Berstel1988}. It is good to note that formal power series were used in systems theory earlier and are also used in many branches of mathematics. In \cite{Beck2001}, a method is presented to obtain a minimal LFR realization of a minimal FPS realization, and vice versa. In \cite{Petreczky2005}, the realization theory of FPS is used to derive a characterization of Bilinear Switched realizations.

\textbf{Related work}
To the best of our knowledge, the results of the paper are new. While the idea of using LFRs for LPV control is a standard one \cite{BriatBook,Rugh2000,Scherer2001,PACKARD1994}, and the transformation was described before \cite{Verdult2002,Vizer2013b,Casella2008}, the structural properties of this transformation, such as preservation of minimality and input-output equivalence were not investigated before. Notice that, in deriving the results of the paper, we use realization theory of ALPV systems \cite{PTM15,PM12}, and realization theory of LFRs (viewed as multidimensional systems) \cite{Beck2001,Ball2005}.

\textbf{Outline of the paper}
In Section~\ref{sec:FLPV}, we present formal definitions to setup the framework of this paper. 
 Section~\ref{sec:Connections_minimality_LPV_LFPV} contains the main results dedicated to the connection between ALPV models and the corresponding \LFPV ones. 
% In Section~\ref{sec:ALPV_FPS}, we present the relationship between \ALPV models, and equivalent FPS representations, while in Section~\ref{sec:LFPV_ALPV}, we characterize LFR models that correspond to equivalent \ALPV ones, based on their series representations. 
Finally, Section~\ref{sec:conclusion} concludes the paper.

\section{The formal setup: LFR and ALPV models} \label{sec:FLPV}
 In this section we present the formal setup of the problem considered in this paper. First, in subsections~\ref{sec:LFR} and~\ref{sec:ALPV}, we define \LFPV and \ALPV models, respectively. In subsection~\ref{sec:LPV_LFR_Transformation}, the transformation from \LFPV to \ALPV models and vice versa is presented. Motivating examples are presented in subsection~\ref{problem:form}, while in subsection~\ref{sec:problem_formulation} the studied problems are formulated.
% That is, we review the definition of LFR models in general, and that of LPV-LFR models in 
% particular.  The latter is a subclass of LFR models which arise from ALPV models.
% Furthermore, we review the definition of ALPV models. We also present an overview of the relevant results on realization theory of LFR and
% ALPV models. 
%We define formally what we mean by the input-output behavior of such models.
% Note that the latter is not obvious, as LFR models need not be well-posed, \emph{i.e.}, they need not have a 
% well-defined solution. This is not surprising, since LFRs represent a feedback interconnection of an
% LTI system with an uncertainty block, and it is well-known that feedback interconnections need not
% be well-posed \cite{Beck1999,ZD98}. 

\subsection{General LFR models as multidimensional systems}\label{sec:LFR}
A \emph{linear fractional representation} (abbreviated as LFR) is presented in Figure \ref{fig:lft_figure}, where 
$M$ is a tuple of matrices $(A,B,C,D)$ representing an LTI system, and 
%, \[
%M     M=\begin{bmatrix} A & B \\ C & D \end{bmatrix},
%\]
%$A \in \mathbb{R}^{n \times n}$, $B \in \mathbb{R}^{n \times m}$, $C \in \mathbb{R}^{p \times n}$, $D \in \mathbb{R}^{p \times m}$, and 
$\Delta$ is a linear operator on suitable function spaces. 
Note the feedback loop in Figure \ref{fig:lft_figure} is not necessarily well-posed. Hence, in order to define the
input-output behavior of an LFR formally,  we have to impose additional
conditions on $M$ and $\Delta$.  There are several such conditions, and their relationship is not trivial \cite{Beck1999,ZD98}. In order to avoid to deal with this issue, we will define \emph{formal input-output maps}, by viewing LFRs as multidimensional systems
\cite{Ball2005}.  We will see that the formal input-output map determines the input-output behavior of the LFR, for those cases which
are of interest for this paper and for which the interconnection is well-posed. 
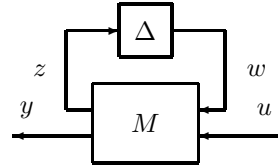
\begin{figure}%{0.2\textwidth}
	\centering
	\begin{picture}(120,70)(0,0)
	\linethickness{0.3mm}
	\put(30,30){\line(1,0){40}}
	\put(30,0){\line(0,1){30}}
	\put(70,0){\line(0,1){30}}
	\put(30,0){\line(1,0){40}}
	\put(50,15){\makebox(0,0)[cc]{$M$}}
	\put(50,50){\makebox(0,0)[cc]{$\Delta$}}
	\put(92,35){\makebox(0,0)[cc]{$w$}}
	\put(10,35){\makebox(0,0)[cc]{$z$}}
	\put(5,20){\makebox(0,0)[cc]{$y$}}
	\put(95,20){\makebox(0,0)[cc]{$u$}}
	\linethickness{0.3mm}
	\put(40,60){\line(1,0){20}}
	\put(40,40){\line(0,1){20}}
	\put(60,40){\line(0,1){20}}
	\put(40,40){\line(1,0){20}}
	\linethickness{0.3mm}
	\put(70,10){\line(1,0){30}}
	\put(70,10){\vector(-1,0){0.12}}
	\linethickness{0.3mm}
	\put(0,10){\line(1,0){30}}
	\put(0,10){\vector(-1,0){0.12}}
	\linethickness{0.3mm}
	\put(20,50){\line(1,0){20}}
	\put(40,50){\vector(1,0){0.12}}
	\linethickness{0.3mm}
	\put(20,20){\line(0,1){30}}
	\linethickness{0.3mm}
	\put(20,20){\line(1,0){10}}
	\linethickness{0.3mm}
	\put(60,50){\line(1,0){20}}
	\linethickness{0.3mm}
	\put(80,20){\line(0,1){30}}
	\linethickness{0.3mm}
	\put(70,20){\line(1,0){10}}
	\put(70,20){\vector(-1,0){0.12}}
	\end{picture}
	\caption{General LFR}
	\label{fig:lft_figure}
\end{figure}
\begin{Definition}%[LFR as a multidimensional system]
\label{LFR:def}
 An LFR is a tuple 
 \begin{equation}
 \label{LFRform}
   \mathcal{M}=(p,m,d, \{n_i\}_{i=1}^{d},A,B,C,D),
 \end{equation}
  where $p,m,d$, $n_{i}$, $i \in \{1,\ldots,d\}$
 are positive integers, and  $A \in \mathbb{R}^{n \times n}$, $B \in \mathbb{R}^{n \times m}$, $C \in \mathbb{R}^{p \times n}$ and $D \in \mathbb{R}^{p \times m}$
 are matrices, where $n=\sum_{i=1}^{d} n_i$. 
\end{Definition}
In Definition \ref{LFR:def}, the choice of integers $\{n_i\}_{i=1}^d$ expresses the tacit assumption that $\Delta=\diag{[\delta_1I_{n_1}, \dots, \delta_dI_{n_d}]}$ when defining the behavior of $\mathcal{M}$, where $\delta_i, \forall i=1,...,d,$ are linear operators on scalar valued sequences, and $I_{n_i}$ is the $n_i \times n_i$ identity matrix. 
Here we used the standard notation of \cite{Beck1999,ZD98}. That is, if $\delta$ is a linear operator on scalar valued sequences, then 
 $\delta I_n$ stands for the linear operator on sequences with values from $\mathbb{R}^n$, such that the result of applying $\delta I_n$ to a sequence is obtained by applying 
$\delta$ to each coordinate of the sequence $x$, see \cite{Beck1999,ZD98} for the formal definition.  
Next, we define what we mean by a formal input-output map of an LFR. To this end, we need the following notation.
\begin{Notation}[Free monoid $\mathcal{X}^{*}$] Let $\mathcal{X}^*$ be the monoid generated by a nonempty finite set $\mathcal{X}$. An element $w \in \mathcal{X}^*$ of length $|w|=n$, is a sequence of the form $x_1 x_2 \dots x_n$, where $x_i \in \mathcal{X}, \forall i=1,2,...,n$. Denote $\epsilon$ the empty sequence where $|\epsilon|=0$.
\end{Notation}
\begin{Definition}[Canonical partitioning]
Let $\mathcal{M}$ be an LFR of the form \eqref{LFRform}. The collection $\{H_i, F_{i,j},G_j\}_{i,j=1}^{d}$ of matrices such that
 $F_{i,j} \in \mathbb{R}^{n_i \times n_j}$, $G_j \in \mathbb{R}^{n_j \times m}$, $H_i \in \mathbb{R}^{p \times n_i}$, $i,j=1,\ldots,d$ and
   \[
      \begin{split}
        & A=\begin{bmatrix} 
              F_{1,1} & F_{1,2} & \cdots & F_{1,d} \\
              F_{2,1} & F_{2,2} & \cdots & F_{2,d} \\
              \vdots & \vdots   & \ddots & \vdots \\
              F_{d,1} & F_{d,2} & \cdots & F_{d,d} 
          \end{bmatrix}, \quad  B=\begin{bmatrix} G_1 \\ G_2 \\ \vdots \\ G_d \end{bmatrix}, \\
      & C=\begin{bmatrix} H_1 & H_2 & \cdots & H_d \end{bmatrix},
      \end{split}
    \]
is called the \emph{canonical partitioning} of $\mathcal{M}$. 
\end{Definition}
Notice that $D \in  \mathbb{R}^{p \times m}$ is not subject to partitioning according to $i$ and $j$.
\begin{Definition}\label{def:fiom_lfr}%[Formal input-output map of an LFR]
  The \emph{formal input-output map} of an LFR $\mathcal{M}$ is a function
%  $X_{\mathcal{M}}:\mathcal{X}^{*} \rightarrow \mathbb{R}^{n \times m}$ and 
  $Y_{\mathcal{M}}:\mathcal{X}^{*} \rightarrow \mathbb{R}^{p \times m}$, 
  $\mathcal{X}=\{1,\ldots,d\}$, and it is defined as follows: $Y_{\mathcal{M}}(\epsilon)=D$ and, for all
   $s=i_1\cdots i_k$, $i_1,\ldots,i_k \in \mathcal{X}$, $k >  0$,
  \[
     \begin{split}
     & Y_{\mathcal{M}}(s)=\left\{\begin{array}{rl} H_{i_1}G_{i_1}  &  k = 1, \\
                                                    H_{i_k}F_{i_k,i_{k-1}} \cdots F_{i_2,i_1}G_{i_1} & k > 1 ,
                                   \end{array}\right. \\
     \end{split}
  \]
   where    $\{H_i, F_{i,j},G_j\}_{i,j=1}^{d}$ is the canonical partitioning of $\mathcal{M}$.%   and \red{ $E_{i}=\begin{bmatrix} O_{n_1+\cdots n_{i-1},n_i}^\top &  I_{n_i} & O_{n_{i+1}+\cdots+n_d,n_i}^\top \end{bmatrix}^\top$} and  $O_{k,l}$ denotes  a $k \times l$   matrix with all zero elements. 
\end{Definition}
%  \cite{Berstel1988}, ${Y}_{\mathcal{M}}$ is a  formal power series. 
%
 \begin{Remark}\emph{(Formal input-output map and star product $\mathcal{M} \star \Delta$)}
 \label{rem:beck}
 Below we explain the relationship between $Y_{\mathcal{M}}$ and the usual 
 star product $\mathcal{M} \star \Delta$. We use the notation and terminology of \cite{Beck1999}. In particular, we denote by
 $l_2(\mathbb{R}^{n})$ the space of all $l_2$ sequences taking values in $\mathbb{R}^{n}$ and we denote by 
 $\mathcal{L}(l_2)$ the set of all bounded operators from $l_2(\mathbb{R})$ to $l_2(\mathbb{R})$, and we use $\|.\|$ to denote the induced operator norm of
 an operator from $\mathcal{L}({l_2})$.  
%The motivation for this particular definition lies in the result of \cite{Beck1999}, which explains the relationship between  formal input-to-state and true input-output maps of the LFR obtained by Star Product and their system-theoretically meaningful counterparts. To this end, we will need the following notation. \red{DO WE NEED ALL?}
% \begin{Notation}
%  Denote by $l_2(\mathbb{R}^{n})$ the set of all sequence $z:\mathbb{N} \rightarrow \mathbb{R}^{n}$ such that the infinite sum $\sum_{k=0}^{\infty} \| z(k)\|^2_2$ is
%  convergent. It is well-known that $l_2(\mathbb{R}^n)$ forms an Hilbert-space with the scalar product 
%  $<z,s> = \sum_{k=0}^{\infty} z(k)^Ts(k)$ and norm $\|z\|=\sqrt{\sum_{k=0}^{\infty} \|z(k)\|_2^2}$. 
%  Denote by $\mathcal{L}(l_2)$ the set of all bounded linear operators defined on $l_2(\mathbb{R})$. 
%  If $z \in l_2(\mathbb{R}^n)$, then the coordinates $z_1,\ldots,z_n$ of $z$ are in $l_2(\mathbb{R})$.
%  If $A \in \mathbb{R}^{n \times m}$ and
%  $\delta \in \mathcal{L}(l_2)$, then $\delta A$ is interpreted as a bounded linear operator from
%  $l_2(\mathbb{R}^m)$ to $l_2(\mathbb{R}^n)$ such that if $z \in l_2(\mathbb{R}^m)$ then the $i$th coordinate of
%  $(\delta A)(z)$ equals $\delta(\sum_{j=1}^m A_{i,j}z_j)$, $i=1,\ldots,n$.
%  Finally, if $\delta \in \mathcal{L}(l_2)$, we denote by $\|\delta\|$ the induced norm of the operator $\delta$. 
% \end{Notation}
  Following \cite{Beck1999}, for any $\gamma > 0$ we define the set
\[
\begin{split}
	& \mathbf{\Delta}_{\gamma} =  \{\Delta = \diag{[\delta_1I_{n_1}, \dots, \delta_dI_{n_d}]}: \delta_i \in \mathcal{L}(l_2), \\
        &  \|\delta_i\| \le \gamma, i=1,\ldots,d \}.
\end{split}
\]
%A sufficient condition for the product $\mathcal{M} \star \Delta$ to be well defined for all blocks $\Delta \in \mathbf{\Delta}$ is that
%$\mathcal{M}$ is stable, see \cite{Beck1999,ZD98}, where
%red{WE NEED A SENTENCE TO INTRODUCE STABILITY AND WHY WE NEED IT}
An LFR $\mathcal{M}$ of the form \eqref{LFRform} is called $\gamma$-\emph{stable}, if $(I_n-A\Delta)$ is an invertible bounded linear operator on $l_2(\mathbb{R}^n)$ for all $\Delta \in \mathbf{\Delta}_{\gamma}$. % From \cite{Beck1999} it follows that this is equivalent to existence of positive definite matrices $Y_i \in \mathbb{R}^{n_i \times n_i}$, $i=1,\ldots,d$, such that for $Y=\diag{[Y_1,\dots,Y_d]}$, $AYA^T - Y < 0$. 
Recall from \cite{Beck1999,ZD98} that if  $\mathcal{M}$ is $\gamma$-stable, then 
for any input $u \in l^2(\mathbb{R}^m)$ and uncertainty block $\Delta \in \mathbf{\Delta}_{\gamma}$, 
the feedback interconnection on Fig. \ref{fig:lft_figure} is well defined, and the corresponding output 
$y \in l^2(\mathbb{R}^p)$ 
satisfies $y=(\mathcal{M} \star \Delta)u$, where the bounded linear operator 
$(\mathcal{M} \star \Delta):l^2(\mathbb{R}^m) \rightarrow l^2(\mathbb{R}^p)$ is defined by
 $\mathcal{M} \star \Delta:=D+C\Delta(I_n-A\Delta)^{-1}B$.
From \cite{Beck1999}, it follows that, if $\mathcal{M}$ is a $\gamma$-stable LFR, then 
   \[
      \begin{split}
    %   & Delta)=\sum_{k=1}^{\infty} \sum_{j_1,\ldots,j_k=1}^{d} X_{\mathcal{M}}(i_1\cdots i_k)(\delta_{i_1}\delta_{i_2} \cdots \delta_{i_k})I_m u, 
      %&  
      & (\mathcal{M} \star \Delta)u\!\!=\!\! \sum_{s \in \mathcal{X}^{*}} Y_{\mathcal{M}}\delta_sI_m u,
%Y_{\mathcal{M}}(\epsilon)u+ 
       %\sum_{k=1}^{\infty}\sum_{j_1,\ldots,j_k=1}^{d}\!\!\! Y_{\mathcal{M}}(i_1\cdots i_k)(\delta_{i_k}\delta_{i_{k-1}}\!\!\cdots\delta_{i_1}\!\!)I_m u,  \\
      \end{split}
   \]
for all $\Delta \in \mathbf{\Delta}_{\gamma}$, 
   where $\delta_{\epsilon}$ is the identity operator and for $s=i_1\cdots i_k$, $i_1,\ldots,i_k \in \mathcal{X}$, $k> 0$,
 %$\delta_{i_k} \cdots \delta_{i_1}$ is the composition of the operators in that order:
 % \emph{i.e.} 
  $\delta_{s}(z)=\delta_{i_k}(\delta_{i_{k-1}}(\cdots \delta_{i_1}(z))\cdots)$ for all
   $z \in l_2(\mathbb{R})$. 
   That is, \emph{the formal input-output map $Y_{\mathcal{M}}$ determines the star product $\mathcal{M} \star \Delta$ uniquely for
   all $\Delta \in \mathbf{\Delta}$, provided $\mathcal{M}$ is stable}.
   %In particular, if the formal input-output maps of two stable LFRs $\mathcal{M}_1$ and $\mathcal{M}_2$ coincide, i.e. 
   %$Y_{\mathcal{M}_1}=Y_{\mathcal{M}_2}$,  then for all 
   %$\Delta \in \mathbf{\Delta}$, $\mathcal{M}_1 \star \Delta=\mathcal{M}_2 \star \Delta$ are equal.
 %\implies \mathbf{Y}_{\mathcal{M}_1}=\mathbf{Y}_{\mathcal{M}_2}. \]
 %\end{Lemma}
 %The result above clearly indicates that for it makes sense to consider formal input-output maps as opposed to 
 %trying to find conditions for the LFR to be well-posed. 
\end{Remark}
In fact, Remark \ref{rem:beck} suggests the following  intuitive interpretation:  \emph{the formal
input-output map determines the input-output behavior of an LFR for  bounded uncertainty and internally stabilizing control inputs.}

 %Indeed, if the LFR is stable, \emph{i.e.}, if it is well-posed, then 
 %formal input-to-state and input-output maps uniquely determine the intuitive input-to-state and input-output behavior of the LFR.
 In order to compare formally the behaviors of LFRs and ALPVs, we need to recall from \cite{Ball2005} some aspects of realization
 theory of LFRs. %In what follows, we will follow \cite{Ball2005}, with some modifications to suit our purposes. 
 %An LFR $\mathcal{M}$ is said to be a \emph{realization} of  a function
 %$f:\mathcal{X}^{*} \rightarrow \mathbb{R}^{p \times m}$, if $Y_{\mathcal{M}}=f$, \emph{i.e.}, $f$ equals the formal input-output map of $\mathcal{M}$. 
 We say that two LFRs $\mathcal{M}_1$ and $\mathcal{M}_2$ are \emph{formally input-output equivalent}, if their formal input-output maps are equals, \emph{i.e.},
 $Y_{\mathcal{M}_1}=Y_{\mathcal{M}_2}$.  If $\mathcal{M}$ is an LFR of the form \eqref{LFRform}, then we call the number $n=n_1+\cdots + n_d$ the \emph{dimension} of
 $\mathcal{M}$ and we denote it by $\dim \mathcal{M}$. We say that the LFR $\mathcal{M}$ is \emph{minimal}, if for any LFR $\mathcal{M}^{'}$
 which is formally input-output equivalent to $\mathcal{M}$, 
 %$f:\mathcal{X}^{*} \rightarrow \mathbb{R}^{p \times m}$, if $\mathcal{M}$ is a realization of $f$, and for any LFR $\mathcal{M}^{'}$ which is a realization of $f$, 
 $\dim \mathcal{M}  \le \dim \mathcal{M}^{'}$. 
% We call an LFR $\mathcal{M}$ \emph{minimal}, if it is a minimal realization of its own formal input-output map $Y_{\mathcal{M}}$. 
 
 Minimal LFRs can be characterized in terms of reachability and observability, and  minimal LFRs which are also input-output equivalent are in fact isomorphic. 
 In order to present this characterization formally, we need the following definitions. Let $\mathcal{M}$ be an LFR of the form \eqref{LFRform} and
 let $\{(H_i,F_{i,j},G_j)\}_{i,j=1}^d$ the corresponding canonical partitioning of $\mathcal{M}$. 
 Define the $k$-step observability $\{ \mathcal{O}_k^{i}(\mathcal{M})\}_{i=1}^{d}$ and $k$-step reachability matrices $\{ \mathcal{R}_k^{i}(\mathcal{M}) \}_{i=1}^{d}$ of $M$ recursively as follows:
 for all $i=1,\ldots,d$, %\red{READABILITY TO BE IMPROVED}
 \begin{equation*}
 \label{LFRreachobs}
 \begin{split}
  & \mathcal{O}_0^{i}(\mathcal{M}) = H_i, \quad  \mathcal{R}^i_0(\mathcal{M})=G_i, \\
  & \mathcal{R}_{k+1}^{i}(\mathcal{M}) \!\!\!=\!\!\!\begin{bmatrix} \mathcal{R}^i_0(\mathcal{M}),\!\! & F_{i,1}\mathcal{R}_k^1(\mathcal{M}), & \cdots,\!\!\! & F_{i,d}\mathcal{R}_k^d(\mathcal{M}) \end{bmatrix}, \!\!\! \\
  & \mathcal{O}_{k+1}^{i}(\mathcal{M}) \!\!\! =\!\!\! \begin{bmatrix} {\mathcal{O}_0^{i}(\mathcal{M})}^\top\!\!\!, &\!\!\!\! (\mathcal{O}_k^1(\mathcal{M}) F_{1,i})^\top\!\!,\!\!\! &\!\! \cdots, \!\!& \!\!\! (\mathcal{O}_k^d(\mathcal{M}) F_{d,i})^\top\end{bmatrix}^\top. \!\!\!
 \end{split}
 \end{equation*}
 We say that $\mathcal{M}$ is \emph{reachable} and \emph{observable}, if $\rank \mathcal{R}_{k}^i(\mathcal{M}) = n_i$ and $\rank \mathcal{O}_{k}^i(\mathcal{M}) =n_i$ for some $k>0$ respectively. %if $\max_{k \ge 0} \rank \mathcal{R}_{k}^i(\mathcal{M})=n_i$ for all $i=1,\ldots,d$. We say that  $\mathcal{M}$ is \emph{observable}, if $\max_{k \ge 0} \rank \mathcal{O}_k^i(\mathcal{M})=n_i$ for all $i=1,\ldots,d$. It can be shown \cite{Kaluzhnyy} that, in fact, $\max_{k \ge 0} \rank \mathcal{R}_{k}^i(\mathcal{M})=\rank \mathcal{R}_{n-1}^i(\mathcal{M})$ $\max_{k \ge 0} \rank \mathcal{O}_{k}^i(\mathcal{M})=\rank \mathcal{O}_{n-1}^i(\mathcal{M})$. 
 
 Hence, observability and reachability of LFRs can be verified numerically, and
 any LFR can be transformed to a reachable and observable LFR whose formal input-output map coincides with that of the original LFR. 
 Let $\mathcal{M}$ be an LFR of the form \eqref{LFRform} and let $\widetilde{\mathcal{M}}=(p,m,d,\tilde{A},\tilde{B},\tilde{C},\tilde{D})$. A nonsingular
 matrix $T \in \mathbb{R}^{n \times n}$ is said to be an \emph{isomorphism} from $\mathcal{M}$ to $\widetilde{\mathcal{M}}$, if 
 $D=\widetilde{D}$, $TAT^{-1}=\widetilde{A}$, $\widetilde{C}=CT^{-1}$, $\widetilde{B}=TB$ and $T=\diag{[T_1,T_2,\ldots,T_d]}$, where
 $T_i \in \mathbb{R}^{n_i \times n_i}$, $i=1,\ldots,d$.  Two LFRs are said to be isomorphic, if there exists an isomorphism from the one to the other.
 \begin{Theorem}[Minimality of LFRs, \cite{Ball2005}]
 \label{lfr:theo}
  An LFR is minimal if and only if it is reachable and observable. Two LFRs which are minimal and formally input-output equivalent are isomorphic.  
  Any LFR can be transformed to a minimal LFR which is formally input-output equivalent to the original one. 
 \end{Theorem}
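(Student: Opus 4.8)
The plan is to treat the formal input-output map $Y_{\mathcal{M}}$ as a matrix-valued recognizable series over the free monoid $\mathcal{X}^{*}$ and to mimic the Ho--Kalman realization argument of LTI theory, carrying the letter indices $i\in\{1,\ldots,d\}$ through the block-diagonal structure. The starting observation is that every value of $Y_{\mathcal{M}}$ factors through the component state spaces: for $s=i_1\cdots i_k$ and any split position $j$,
\[
  Y_{\mathcal{M}}(s)=\bigl[H_{i_k}F_{i_k,i_{k-1}}\cdots F_{i_{j+1},i_j}\bigr]\bigl[F_{i_j,i_{j-1}}\cdots F_{i_2,i_1}G_{i_1}\bigr],
\]
where the left bracket is a block-row of the observability matrix $\mathcal{O}^{i_j}(\mathcal{M})$ and the right bracket a block-column of the reachability matrix $\mathcal{R}^{i_j}(\mathcal{M})$. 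Hence, for each pivot letter $i$, assembling all such products yields a block Hankel matrix $\mathcal{H}^{i}=\mathcal{O}^{i}(\mathcal{M})\mathcal{R}^{i}(\mathcal{M})$ that depends only on $Y_{\mathcal{M}}$, and $\rank\mathcal{H}^{i}\le n_i$. I would first verify, by a simultaneous induction over all components, that the column spaces of $\mathcal{R}_k^{i}$ and the row spaces of $\mathcal{O}_k^{i}$ are nondecreasing in $k$ and bounded by $n_i$, so they stabilize at some finite $k\le n$; this justifies the finite rank tests and the well-definedness of the stabilized matrices $\mathcal{O}^{i},\mathcal{R}^{i}$.

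For the lower bound I would argue that any LFR realizing $Y_{\mathcal{M}}$ must satisfy $n_i\ge\rho_i:=\rank\mathcal{H}^{i}$, since $\mathcal{H}^{i}$ factors through $\mathbb{R}^{n_i}$. Thus $\dim\mathcal{M}=\sum_i n_i\ge\sum_i\rho_i$, and equality forces $\rank\mathcal{O}^{i}=\rank\mathcal{R}^{i}=n_i$, i.e.\ reachability and observability. Conversely, a reachable and observable LFR attains $n_i=\rho_i$, hence is minimal; this gives the first claim. The third claim I would obtain by the block-wise Kalman decomposition: restrict each component to the column space of $\mathcal{R}^{i}$ and then quotient by the unobservable subspace $\bigcap_k\ker\mathcal{O}_k^{i}$. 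Choosing bases adapted to these subspaces produces new blocks $F_{i,j},G_j,H_i$ of smaller size that leave every product in the factorization above unchanged, hence preserve $Y_{\mathcal{M}}$, and yield a reachable-and-observable, i.e.\ minimal, LFR.

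For the isomorphism statement, given two minimal realizations $\mathcal{M}$ and $\widetilde{\mathcal{M}}$ with $Y_{\mathcal{M}}=Y_{\widetilde{\mathcal{M}}}$, the two factorizations $\mathcal{H}^{i}=\mathcal{O}^{i}\mathcal{R}^{i}=\widetilde{\mathcal{O}}^{i}\widetilde{\mathcal{R}}^{i}$ have all four factors of full rank $\rho_i=n_i=\tilde n_i$. A full-rank factorization then yields a unique invertible $T_i\in\mathbb{R}^{n_i\times n_i}$ with $\widetilde{\mathcal{R}}^{i}=T_i\mathcal{R}^{i}$ and $\widetilde{\mathcal{O}}^{i}=\mathcal{O}^{i}T_i^{-1}$, where consistency of these two expressions follows from the Hankel equality together with the right/left cancellability of the full-rank factors. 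Setting $T=\diag{[T_1,\ldots,T_d]}$ and reading off the shift relations built into the recursive definitions of $\mathcal{R}_k^{i}$ and $\mathcal{O}_k^{i}$ gives $\widetilde{G}_i=T_iG_i$, $\widetilde{H}_i=H_iT_i^{-1}$ and $\widetilde{F}_{i,j}=T_iF_{i,j}T_j^{-1}$, together with $\widetilde{D}=D=Y_{\mathcal{M}}(\epsilon)$, which is exactly the block-diagonal isomorphism required.

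The main obstacle I anticipate is the bookkeeping in this last step: extracting the intertwining relations $\widetilde{F}_{i,j}=T_iF_{i,j}T_j^{-1}$ for the off-diagonal blocks from the Hankel equality, and confirming that the $T_i$ defined pivot-by-pivot are mutually consistent and are precisely the maps implementing a ``shift by one letter'' on the stabilized reachability and observability spaces. In the single-letter LTI case this is the classical Ho--Kalman computation; here one must track how prepending and appending each letter $j$ acts on the factorizations and check that the same $T_i$ serves for every word through pivot $i$. The uniform finite-$k$ stabilization of the reachability and observability spans across components is the other point needing care, since it underlies both the rank tests and the well-definedness of $\mathcal{O}^{i},\mathcal{R}^{i}$.
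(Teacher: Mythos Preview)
The paper does not give its own proof of this theorem: it is stated with attribution to \cite{Ball2005} and used as a black box throughout. So there is nothing in the paper to compare your argument against.

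That said, your outline is the standard realization-theoretic proof one finds in the multidimensional/structured-noncommutative literature (Beck--Doyle, Ball--Groenewald--Malakorn), and it is correct in substance. The per-letter Hankel factorization $\mathcal{H}^{i}=\mathcal{O}^{i}\mathcal{R}^{i}$, the rank lower bound $n_i\ge\rho_i$, the block Kalman reduction, and the extraction of $T_i$ from two full-rank factorizations are exactly the right moves. The point you flag as the main obstacle---consistency of the $T_i$ across pivots and the off-diagonal intertwining $\widetilde F_{i,j}=T_iF_{i,j}T_j^{-1}$---is handled by the shift relation built into the recursions: the columns $F_{i,j}\mathcal{R}_k^{j}$ sit inside $\mathcal{R}_{k+1}^{i}$, so $\mathcal{O}^{i}F_{i,j}\mathcal{R}^{j}$ is a submatrix of $\mathcal{H}^{i}$ depending only on $Y_{\mathcal{M}}$; equating the two realizations and cancelling the full-rank factors $\mathcal{O}^{i}$ on the left and $\mathcal{R}^{j}$ on the right gives the relation directly. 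The finite-step stabilization is immediate since each $\mathrm{Im}\,\mathcal{R}_k^{i}$ is nondecreasing in a space of dimension $n_i$.
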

 Note that stable LFRs in the sense of \cite{Beck1999} are closed under minimization. 
 \begin{Theorem}[\cite{Beck1999}]
  Any minimal LFR which is formally input-output equivalent to a $\gamma$-stable LFR is also $\gamma$-stable. 
 \end{Theorem}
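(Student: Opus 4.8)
The plan is to reduce the statement to two facts about how $\gamma$-stability behaves under the two operations that relate an arbitrary LFR to a minimal one, namely minimization and block-structured isomorphism. Let $\mathcal{M}_0$ be the given $\gamma$-stable LFR and let $\mathcal{M}$ be minimal with $Y_{\mathcal{M}} = Y_{\mathcal{M}_0}$. First I would apply Theorem~\ref{lfr:theo} to minimize $\mathcal{M}_0$, obtaining a minimal LFR $\mathcal{M}_0^{\min}$ with $Y_{\mathcal{M}_0^{\min}} = Y_{\mathcal{M}_0} = Y_{\mathcal{M}}$. Since $\mathcal{M}$ and $\mathcal{M}_0^{\min}$ are both minimal and formally input-output equivalent, Theorem~\ref{lfr:theo} provides a block-diagonal isomorphism $T = \diag{[T_1,\ldots,T_d]}$ between them; in particular they share the same block sizes $n_i$, so the admissible uncertainty set $\mathbf{\Delta}_\gamma$ is literally the same for both. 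It then suffices to show \emph{(a)} that $\mathcal{M}_0^{\min}$ is $\gamma$-stable, i.e. minimization preserves $\gamma$-stability, and \emph{(b)} that $\gamma$-stability is invariant under block-diagonal isomorphism; chaining these yields $\gamma$-stability of $\mathcal{M}$.

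Step \emph{(b)} is the short, concrete computation and I would carry it out first. The key observation is that the block-diagonal structure $T = \diag{[T_1,\ldots,T_d]}$ conforming to the partition $\{n_i\}$ makes $T$ commute with every $\Delta = \diag{[\delta_1 I_{n_1}, \ldots, \delta_d I_{n_d}]} \in \mathbf{\Delta}_\gamma$: on each block, the constant matrix $T_i$ commutes with the coordinatewise scalar operator $\delta_i I_{n_i}$ by linearity of $\delta_i$, so $T\Delta = \Delta T$ and likewise $T^{-1}\Delta = \Delta T^{-1}$. Writing $\widetilde{A} = TAT^{-1}$ for the transformed dynamics, this gives the conjugation identity $I_n - \widetilde{A}\Delta = T(I_n - A\Delta)T^{-1}$. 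Since $T$ and $T^{-1}$ are bounded operators on $l_2(\mathbb{R}^n)$ (multiplication by a constant invertible matrix), $I_n - \widetilde{A}\Delta$ is an invertible bounded operator if and only if $I_n - A\Delta$ is. Hence isomorphic LFRs are simultaneously $\gamma$-stable or not, which is exactly \emph{(b)}.

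For step \emph{(a)} I would reduce minimization to its reachability and observability parts (Theorem~\ref{lfr:theo} characterizes minimality by these) and treat reachability; observability is dual via transposition. The crucial structural point is that the reachable subspace is \emph{block-respecting}: since the columns of $\mathcal{R}_k^i(\mathcal{M})$ live in $\mathbb{R}^{n_i}$, the reachable subspace is $W = \bigoplus_{i=1}^d W_i$ with $W_i \subseteq \mathbb{R}^{n_i}$, and the recursion $F_{i,j}\mathcal{R}_k^j \subseteq \mathcal{R}_{k+1}^i$ shows $A W \subseteq W$. Crucially, the closed subspace $\mathbf{W} \subseteq l_2(\mathbb{R}^n)$ of $W$-valued sequences is invariant under every $\Delta \in \mathbf{\Delta}_\gamma$ (a defining-kernel computation using linearity of the $\delta_i$), hence under $A\Delta$ and under $I_n - A\Delta$; in a block-conforming basis adapted to $W$ the restriction of $I_n - A\Delta$ to $\mathbf{W}$ is exactly $I - A_{11}\Delta_r$, the operator of the reachable sub-LFR $\mathcal{M}_0^r$. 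It remains to transfer invertibility from $I_n - A\Delta$ to its restriction to the invariant subspace $\mathbf{W}$.

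The main obstacle is precisely this last transfer: in infinite dimensions, $L := I_n - A\Delta$ being invertible with $L\mathbf{W} \subseteq \mathbf{W}$ does \emph{not} by itself force $L|_{\mathbf{W}}$ to be onto $\mathbf{W}$ (unlike the finite-dimensional case). To settle it I would run a homotopy/connectedness argument: for fixed $\delta_1,\ldots,\delta_d$ set $L_t = I_n - A(t\Delta)$ for $t \in [0,1]$; each $t\Delta \in \mathbf{\Delta}_\gamma$, so $L_t$ is invertible with $t \mapsto L_t^{-1}$ norm-continuous and $\sup_{t \in [0,1]}\|L_t^{-1}\| < \infty$. The set $S = \{ t : L_t \mathbf{W} = \mathbf{W} \}$ contains $0$, is open (invertibility of $L_t|_{\mathbf{W}}$ is an open condition), and is closed (norm-limits of restrictions with uniformly bounded inverses are invertible), hence $S = [0,1]$; at $t=1$ this yields invertibility of $L|_{\mathbf{W}}$, i.e. $\gamma$-stability of $\mathcal{M}_0^r$. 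Running the dual argument for observability completes \emph{(a)}. I note that \emph{(a)} is exactly the closure-under-minimization fact recorded before the theorem and attributed to \cite{Beck1999}, so one may alternatively cite it directly and let the novelty rest on the commutation identity of step \emph{(b)}.
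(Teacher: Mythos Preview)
The paper does not prove this statement at all: it is quoted from \cite{Beck1999} and no argument is given in the text or the appendix. So there is no ``paper's own proof'' to compare to; you are supplying a proof where the paper simply cites one.

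Your argument is correct in outline and in detail, but step \emph{(a)} is more laborious than necessary. You correctly observe that the reachable subspace is block-respecting, $W=\bigoplus_i W_i$, and that this makes the $l_2$-lift $\mathbf{W}$ invariant under every $\Delta\in\mathbf{\Delta}_\gamma$. What you do not exploit is that, for exactly the same reason, the orthogonal complement $\mathbf{W}^\perp$ (corresponding to $\bigoplus_i W_i^\perp$) is \emph{also} $\Delta$-invariant. Hence, in coordinates adapted to $\mathbf{W}\oplus\mathbf{W}^\perp$, the operator $\Delta$ is block-\emph{diagonal} while $A$ is block upper-triangular, so
\[
I_n-A\Delta=\begin{bmatrix} I-A_{11}\Delta_r & \ast \\ 0 & I-A_{22}\Delta_c\end{bmatrix}.
\]
A bounded block upper-triangular operator on a Hilbert space is invertible if and only if both diagonal blocks are (write out $LL^{-1}=L^{-1}L=I$ blockwise), which immediately gives invertibility of $I-A_{11}\Delta_r$ and hence $\gamma$-stability of the reachable sub-LFR. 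The observability reduction is identical with the unobservable subspace in place of $W$. This bypasses the homotopy entirely; your connectedness argument is valid but unnecessary once you notice that $\Delta$ respects the \emph{full} direct-sum decomposition, not just $\mathbf{W}$.

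One minor remark: your closing sentence, that \emph{(a)} ``is exactly the closure-under-minimization fact recorded before the theorem and attributed to \cite{Beck1999}'', is circular. The sentence preceding the theorem and the theorem itself are the same assertion; if you are willing to cite \cite{Beck1999} for \emph{(a)} there is nothing left to prove, since \emph{(b)} is already subsumed (two minimal formally input-output equivalent LFRs are isomorphic, so once one minimal model is shown $\gamma$-stable the other follows). Either prove both \emph{(a)} and \emph{(b)}, as you essentially do, or cite the whole thing.
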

 \begin{Remark}[Significance of minimality for control]
 \label{rem:min}
  If two LFRs are isomorphic, then they behave
%their corresponding LTI blocks are algebraically similar. Moreover, if $T$ is an isomorphism between two LFRs, then
% for any $\Delta \in \mathbf{\Delta}$, $T\Delta=\Delta T$. This means that isomorphic LFRs behave 
in the same manner when interconnected with a controller.
 Indeed, if the controller itself is an LFR, then its interconnection with two isomorphic LFRs yield two closed-loop systems which are also
 isomorphic LFRs.  In particular, if one of the closed-loop systems is stable (in the sense of Remark \ref{rem:beck}) then so is 
 the other, and vice versa, and the input-output behaviors defined by the star product of the two closed-loop systems are the same.
 Since all minimal and formally input-output equivalent LFRs are isomorphic, then any
 controller which stabilizes a minimal LFR and achieves certain input-output behavior  will  also stabilize
 and achieve the same input-output behavior for any other minimal and formally input-output equivalent LFR. 
 %Moreover, if the controller is an LFR itself, then the corresponding closed-loop
 %systems will be isomorphic and formaly input-output equivalent. In particular,  if the closed-loop 
 %system is a stable LFR in the sense of Remark \ref{rem:beck}, then the closed-loop system will be stable for any other minimal and formally
 %input-output equivalent LFR. In particular, all these closed-loop systems will be well-posed for any $\Delta \in \mathbf{\Delta}$ (see Remark \ref{rem:beck})
 %and their corresponding input-output maps, obtained by taking the star product with $\Delta$ will be equal. 
 To sum up, \emph{minimal and formally input-output equivalent LFRs yield the same closed-loop behavior when interconnected with any stabilizing LFR controller}. 
 This is why the preservation of minimality by ALPV to LFR transformation is so important. 
\end{Remark}

\subsection{Affine LPV Systems} \label{sec:ALPV}
Below, we recall some basic definitions for affine LPV models. We follow the terminology of \cite{PTM15}. 
A discrete-time Affine Linear Parameter-Varying (ALPV) model $(\Sigma)$ is defined as follows
\begin{equation}
\label{equ:FALPVSystem}
\Sigma\left\{
\begin{array}{lcl}
x(k+1) &=&  A(p(k)) x(k) + B(p(k)) u(k) , \\
y(k) &=& C(p(k)) x(k) + D(p(k)) u(k).
\end{array}
\right. 
\end{equation}
where $x(k) \in \mathbb{X}=\mathbb{R}^{n_\mathrm{x}}$ is the state vector, $y(k) \in \mathbb{Y}=\mathbb{R}^{n_\mathrm{y}}$ is the (measured) output signals, $u (k) \in \mathbb{U} = \mathbb{R}^{n_\mathrm{u}}$ represents the input signals while $p(k) \in \mathbb{P}= \mathbb{R}^{n_\mathrm{p}}$ is the scheduling variables of the system represented by $\Sigma$, and for all $p \in \mathbb{P}$,
%$A:\mathbb{P} \rightarrow \mathbb{R}^{n_{\mathrm x} \times n_{\mathrm x}}$, $B:\mathbb{P} \rightarrow \mathbb{R}^{n_{\mathrm x} \times n_{\mathrm u}}$, $C:\mathbb{P} \rightarrow \mathbb{R}^{n_{\mathrm y} \times n_{\mathrm x}}$ and $D:\mathbb{P} \rightarrow \mathbb{R}^{n_{\mathrm y} \times n_{\mathrm u}}$ defining the \ALPV representation \eqref{equ:FALPVSystem} are considered to be affine functions in the scheduling variable, \emph{i.e.}, there exist constant matrices
\begin{equation}\label{equ:FALPVSystemMatrices}
\begin{split}
& A(p) = A_0 + \sum_{i=1}^{{\NP}} A_i p_i, \quad  B(p) = B_0 + \sum_{i=1}^{{\NP}} B_i p_i,	 \\
& C(p) = C_0 + \sum_{i=1}^{{\NP}} C_i p_i, \quad  D(p) = D_0 + \sum_{i=1}^{{\NP}} D_i p_i,
\end{split}
%\begin{split}
%& A(p) = A_0 + \sum_{i=1}^{{\NP}} A_i \psi_i(p),	 \\
%& B(p) = B_0 + \sum_{i=1}^{{\NP}} B_i \psi_i(p),	 \\
%& C(p) = C_0 + \sum_{i=1}^{{\NP}} C_i \psi_i(p),	 \\
%& D(p) = D_0 + \sum_{i=1}^{{\NP}} D_i \psi_i(p). 
%\end{split}
\end{equation}
for constant matrices $A_i \in \mathbb{R}^{n_{\mathrm x} \times n_{\mathrm x}}, B_i  \in \mathbb{R}^{n_{\mathrm x} \times n_{\mathrm u}}, C_i \in \mathbb{R}^{n_{\mathrm y} \times n_{\mathrm x}}, D_i \in \mathbb{R}^{n_{\mathrm y} \times n_{\mathrm u}}$, $i \in \INP$.
In the sequel, we will use the short notation
\begin{equation*}
\Sigma=(\NP,\NX, \NU, \NY, \{ A_i, B_i, C_i, D_i\}_{i=0}^{{\NP}})
\end{equation*}
%to define an \ALPV of the form~\eqref{equ:FALPVSystem}.
to define a model of the form \eqref{equ:FALPVSystem}. The \emph{dimension of $\Sigma$} is the dimension $\NX$ of its state-space. Note that the system dimension $\NX$ does not depend on the number (dimension) of the scheduling parameters. 
%In the rest of the section, \emph{$\Sigma$ stands for an ALPV of the form \eqref{equ:FALPVSystem}}.
By a solution  of $\Sigma$ we mean a tuple of trajectories $(x,y,u,p)\in(\mathcal{X},\mathcal{Y},\mathcal{U},\mathcal{P})$ satisfying \eqref{equ:FALPVSystem} for all $k \in \mathbb{N}$, where 
$\mathcal{X}=\mathbb{X}^\mathbb{N}, \mathcal{Y}=\mathbb{Y}^{\mathbb{N}}, \mathcal{U}=\mathbb{U}^{\mathbb{N}},\mathcal{P}=\mathbb{P}^\mathbb{N}$, and we use the following notation:
%%In the sequel, we denote by $\mathbb{N}$ the set of natural numbers. 
for a set $\mathbb{A}$, we denote by $\mathbb{A}^{\mathbb{N}}$ the set of all functions of the form $\phi:\mathbb{N} \rightarrow \mathbb{A}$. An element of $\mathbb{A}^{\mathbb{N}}$ can be thought of as a signal in discrete-time.
%%%Finally, in the sequel we denote by $\mathbb{I}_0^{n_\mathrm{p}}$ the set $\{0,1,2,\ldots, n_\mathrm{p}\}$. 
%%\end{Notation}
%We say that $\Sigma$ is \emph{span-reachable}, if $\mathbb{X}$ is the linear span of all the vectors of the form
%$x_f=x(t)$, where $(x,y,u,p)$ is a solution of $\Sigma$, $x(0)=0$ and $t \in \mathbb{N}$. That is,
%$\Sigma$ is span-reachable, if the linear span of states reachable from the zero initial state yields the whole state-space.
%We say that $\Sigma$ is \emph{observable}, if for any $x_{1,0} \in \mathbb{X}$ and $x_{2,0} \in \mathbb{X}$, there exists an input
%$u \in \mathcal{U}$, a scheduling signal $p \in \mathcal{P}$, and two solutions $(x_1,y_1,u,p)$ and $(x_2,y_2,u,p)$ of
%$\Sigma$, such that $x_1(0)=x_{1,0}$, $x_2(0)=x_{2,0}$ and $y_1 \ne y_2$. 
%Next, we define the notion of input-output maps and input-to-state maps of $\Sigma$ induced by an initial state. 
%Let $x_0 \in \X$ be an initial state of $\Sigma$. 
Define the \emph{input-output function of $\Sigma$} as the function 
%\begin{subequations}
%\begin{align}
%X_{\Sigma,x_0} &:   \mathcal{U} \times \mathcal{P}  \rightarrow \mathcal{X}, \\
$Y_{\Sigma}:   \mathcal{U} \times \mathcal{P}  \rightarrow \mathcal{Y}$
%\end{align}
%\end{subequations}
such that for any $(x,y,u,p) \in \mathcal{X} \times \mathcal{Y} \times \mathcal{U} \times \mathcal{P}$,
%$x=X_{\Sigma,x_\mathrm{o}}(u,p)$ and 
$y=Y_{\Sigma}(u,p)$ holds if and only if
$(x,y,u,p)$ is a solution of $\Sigma$ and $x(0)=0$.
%The function $X_{\Sigma,x_{0}}$ is called the \emph{input-to-state function} of $\Sigma$ induced by the initial state $x_{0}$, and the function
%$Y_{\Sigma,x_{0}}$ is called the  \emph{input-to-output function} of $\Sigma$ induced by $x_{0}$. 
%We say that $\Sigma$ is \emph{span-reachable} from an initial state $x_0 \in \mathbb{X}$, if  $\mathrm{Span}\{ X_{\Sigma,x_0}(u,p)(t) \mid (u,p) \in \mathcal{U} \times \mathcal{P}, t \in \mathbb{N} \}=\mathbb{X}$. 
%We say that $\Sigma$ is \emph{observable}, if for any two states $x_1 \in \mathbb{X}$ and $x_2 \in \mathbb{X}$, $Y_{\Sigma,x_1} = Y_{\Sigma,x_2}$ implies $x_1 = x_2$. In the sequel, for the sake of simplicity, we will assume that the initial state of interest is zero.  We will say that $\Sigma$ is \emph{span-reachable}, if $\mathrm{Span}\{ X_{\Sigma,x_0}(u,p)(t) \mid (u,p) \in \mathcal{U} \times \mathcal{P}, t \in \mathbb{N} \}=\mathbb{X}$.
% We refer to the input-output map $Y_{\Sigma,0}$ induced by the initial state $x_0=0$ as the \emph{input-output map} of $\Sigma$, and we will denote $Y_{\Sigma,0}$ by $Y_{\Sigma}$. 
Two ALPVs are said to be \emph{input-output equivalent}, if their input-output maps coincide.  An \ALPV $\Sigma$ is said to be
\emph{a minimal}, if for any \ALPV\ $\hat{\Sigma}$ which is input-output equivalent to $\Sigma$, $\dim \Sigma \le \dim \hat{\Sigma}$. 

 From \cite{PM12,PTM15}, it follows that minimal ALPV systems can be characterized via observability and span-reachability, and input-output
 equivalent minimal ALPVs are isomorphic. In order to state this result precisely,
define
the \emph{$n$-step extended reachability matrix $\mathcal{R}_n(\Sigma)$ of $\Sigma$}, 
and \emph{$n$-step extended observability matrix $\mathcal{O}_n(\Sigma)$ of $\Sigma$}, $n \in \mathbb{N}$, recursively as follows:
%$n \in \mathbb{N}$, is defined recursively as follows
\begin{align*}
\mathcal{R}_0(\Sigma) &=
\begin{bmatrix}
  B_0,  & B_1, & \ldots, & B_{n_p}
\end{bmatrix}, \\
\mathcal{R}_{n+1}(\Sigma) &=
\begin{bmatrix}
  \mathcal{R}_0(\Sigma), & A_0 \mathcal{R}_n(\Sigma), & \ldots, & A_{n_\mathrm{p}} \mathcal{R}_n(\Sigma)
\end{bmatrix}, \\
%\end{align*}
%while the 
%\begin{align*}
\mathcal{O}_0(\Sigma) &= 
\begin{bmatrix}
C_0^\top, & C_1^{\top}, & \ldots,  & C_{n_\mathrm{p}}^\top
\end{bmatrix}^\top, \\
\mathcal{O}_{n+1}(\Sigma) &=
\begin{bmatrix}
\mathcal{O}_0^\top(\Sigma), & A_0^\top \mathcal{O}_{n}^\top(\Sigma),  & \ldots, & A_{n_\mathrm{p}}^\top \mathcal{O}^\top_n(\Sigma)
\end{bmatrix}^\top.
\end{align*}
%\end{Definition}
Let us call $\Sigma$ \emph{span-reachable}, if $\Rank \mathcal{R}_{\NX-1}=\NX$, and let us call $\Sigma$ \emph{observable}, if
$\Rank \mathcal{O}_{\NX-1}=\NX$. 
%Note that an equivalent defintion of span-reachability and observability of an ALPV can be formulated
%in terms of properties of its trajectoris, see \cite{PM12,PTM15}. 
Finally, we % recall the notion of isomorphism for \ALPV models.
%\begin{Definition}[\textbf{\ALPV isomorphism}]\label{sect:problem_form:lin:morphism}
consider an ALPV $\Sigma$ of the form \eqref{equ:FALPVSystem}, and an ALPV
 $\Sigma'=(\NP,\NX, \NU, \NY,\{A_i', B_i', C_i', D_i'\}_{i=0}^{\NP})$ with $dim(\Sigma)=dim(\Sigma')=\NX$. A nonsingular matrix $\MORPH \in \mathbb{R}^{\NX \times \NX}$ is said to be an \emph{\ALPV\  isomorphism} from $\Sigma$ to $\Sigma'$, if for all $i=0,\ldots,\NP$
\begin{align*}
 A_i'\MORPH &= \MORPH A_i,  & B_i'&= \MORPH B_i,  & C_i'\MORPH&= C_i, & D_i'&= D_i .
\end{align*}
Now we can recall the following result from \cite{PM12}.
\begin{Theorem}[ \cite{PM12,PTM15}] %An \ALPV $\Sigma$ is
\label{min:theo}
%\begin{itemize}
%\item[-]
 %span-reachable if and only if its $\NX-1$th step extended controllability matrix $\mathcal{R}_{\NX-1}(\Sigma)$ is of rank $\NX$, and $\Sigma$ is observable if and only if its $\NX-1$th step extended observability matrix $\mathcal{O}_{\NX-1}(\Sigma)$ is of rank $\NX$.
%\item[-]
 An ALPV is minimal if and only if it is span-reachable and observable.  
 Any two minimal ALPVs which are input-output equivalent are isomorphic.  Any ALPV can be transformed to an input-output equivalent minimal ALPV.
\end{Theorem}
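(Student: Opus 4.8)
The plan is to reduce the statement to the realization theory of recognizable formal power series over the finite alphabet $\INP$, exactly as one does for linear switched systems. The first step is to extract from the input-output map $Y_{\Sigma}$ a family of \emph{generalized Markov parameters}. Setting $p_0 \equiv 1$ and expanding the state recursion from $x(0)=0$, one checks by induction that the contribution of the input sample $u(j)$ to the output $y(k)$, along a scheduling trajectory whose samples at times $k,k-1,\ldots,j$ select indices $i_k,\ldots,i_j \in \INP$, equals $C_{i_k} A_{i_{k-1}} \cdots A_{i_{j+1}} B_{i_j}$, while the direct feedthrough contributes $D_{i_k}$. Hence each $y(k)$ is a polynomial in the coordinates of $p(0),\dots,p(k)$ whose coefficients are products of the matrices $A_i,B_i,C_i,D_i$. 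Since the scheduling ranges over all of $\mathbb{P}=\mathbb{R}^{\NP}$, a polynomial-interpolation argument shows that $Y_{\Sigma}$ both determines and is determined by the family
\[
M_{\Sigma}(w) = C_{i_k} A_{i_{k-1}} \cdots A_{i_1} B_{i_0}, \qquad w=i_0 i_1 \cdots i_k \in \INP^*,
\]
together with $\{D_i\}$. Consequently two ALPVs are input-output equivalent if and only if they produce the same Markov parameters $M_{\Sigma}$ and the same $\{D_i\}$.

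Next I would interpret the matrices defined before the theorem. The columns of $\mathcal{R}_n(\Sigma)$ span the subspace generated by all products $A_{i_k}\cdots A_{i_1} B_{i_0}$ of length at most $n$; these form an increasing chain of subspaces of $\X$, which stabilizes by dimension, so that $\IM \mathcal{R}_{\NX-1}$ equals the smallest $\{A_i\}$-invariant subspace containing all $\IM B_i$ — the \emph{span-reachable subspace}. Thus $\Rank \mathcal{R}_{\NX-1}=\NX$ says this subspace is all of $\X$; dually, $\Rank \mathcal{O}_{\NX-1}=\NX$ says the largest $\{A_i\}$-invariant subspace contained in $\bigcap_i \ker C_i$ is trivial. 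Moreover the block Hankel matrix $H_{\Sigma}$ whose $(v,w)$ block is $M_{\Sigma}(vw)$ factors as $H_{\Sigma} = \mathcal{O}(\Sigma)\,\mathcal{R}(\Sigma)$ for sufficiently long observability and reachability matrices.

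For minimality, the factorization gives $\Rank H_{\Sigma} \le \Rank \mathcal{R}(\Sigma) \le \dim\Sigma$ and likewise through $\mathcal{O}$, so every realization of a given $M_{\Sigma}$ has dimension at least $\Rank H_{\Sigma}$. If $\Sigma$ is span-reachable and observable, both factors have rank $\NX$, forcing $\Rank H_{\Sigma} = \NX = \dim\Sigma$, so $\Sigma$ attains the lower bound and is minimal. Conversely, if $\Sigma$ is not span-reachable or not observable, I would restrict the dynamics to the reachable subspace and quotient by the unobservable subspace (both $\{A_i\}$-invariant), producing an input-output equivalent ALPV of strictly smaller dimension; hence $\Sigma$ is not minimal. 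This proves the first claim, and the same reduction produces a minimal input-output equivalent ALPV, giving the existence claim.

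Finally, for isomorphism, take two minimal (hence span-reachable and observable) ALPVs $\Sigma,\Sigma'$ with $M_{\Sigma}=M_{\Sigma'}$, so $\mathcal{O}(\Sigma)\mathcal{R}(\Sigma)=\mathcal{O}(\Sigma')\mathcal{R}(\Sigma')$ with all four factors of rank $\NX$. Because $\mathcal{O}(\Sigma)$ has full column rank, the row space of $\mathcal{R}(\Sigma)$ equals that of $H_{\Sigma}$, and likewise for $\Sigma'$, so the two reachability matrices share a row space; since $\mathcal{R}(\Sigma)$ has full row rank, there is a unique $\MORPH$ with $\MORPH\mathcal{R}(\Sigma)=\mathcal{R}(\Sigma')$, and a symmetric construction yields its inverse. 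I would then read off the isomorphism relations block by block: the initial block gives $\MORPH B_i = B_i'$, the recursive shift structure $\mathcal{R}_{n+1}=[\mathcal{R}_0,A_0\mathcal{R}_n,\ldots,A_{\NP}\mathcal{R}_n]$ together with full row rank of $\mathcal{R}_n$ gives $A_i'\MORPH = \MORPH A_i$, the dual relation $\mathcal{O}(\Sigma)=\mathcal{O}(\Sigma')\MORPH$ gives $C_i'\MORPH=C_i$, and $D_i'=D_i$ comes directly from the matching Markov data. The main obstacle is precisely this last verification: one must carefully exploit the recursive shift structure of the reachability and observability matrices — that prepending a letter $i$ to every column word corresponds to left multiplication by $A_i$ — to convert the single equality of Hankel matrices into the four individual intertwining identities, and to confirm that the matrix obtained from $\mathcal{R}$ coincides with the one obtained from $\mathcal{O}$, so that $\MORPH$ is well defined and nonsingular.
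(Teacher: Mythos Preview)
The paper does not prove this theorem at all; it is stated as a recall from \cite{PM12,PTM15} and used as a black box throughout. So there is no ``paper's own proof'' to compare against.

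That said, your sketch is the standard realization-theoretic argument that those references carry out, and it is correct. The reduction to Markov parameters via polynomial interpolation in the scheduling variables (which works because $\mathbb{P}=\mathbb{R}^{\NP}$), the identification of $\IM\mathcal{R}_{\NX-1}$ with the smallest $\{A_i\}$-invariant subspace containing the $\IM B_i$, the Hankel factorization $H_{\Sigma}=\mathcal{O}(\Sigma)\mathcal{R}(\Sigma)$ giving the rank lower bound, the Kalman-type reduction for the converse and the existence claim, and the construction of $\MORPH$ from the full-rank factorization --- all of this is exactly how \cite{PM12,PTM15} proceed (the underlying machinery being Fliess' realization theory for recognizable formal power series, specialized to the alphabet $\INP$). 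Your identification of the ``main obstacle'' is also accurate: the shift structure of $\mathcal{R}_{n+1}$ is precisely what converts the single Hankel identity into the per-letter intertwining relations, and the uniqueness of the rank factorization is what guarantees the $\MORPH$ built from reachability agrees with the one built from observability.
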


\subsection{Transforming ALPVs to LFRs}\label{sec:LPV_LFR_Transformation}
 One popular approach for control of ALPVs is to transform them to LFRs as follows \cite{Verdult2002}.

 \begin{Definition} 
\label{LPV2LFR}
 Let $\Sigma$ be an ALPV of the form \eqref{equ:FALPVSystem}. An LFR-LPV 
$\mathcal{M}$ of the form \eqref{LFRform} is called an \emph{LFR calculated from
 $\Sigma$}, if $d=\QNUM+1$, $n_1=\NX$, $p=\NY$, $m=\NU$, $D=D_0$ and the canonical decomposition 
 $\{(H_i, F_{i,j},G_j)\}_{i,j=1}^{d}$ of $\mathcal{M}$ satisfies the following properties:
 \begin{enumerate}
 \item $H_1=C_0$, $G_1=B_0$, $F_{1,1}=A_0$,
 \item for all $i,j=1,\ldots, d$,  if $i > 1$ and $j > 1$, then $F_{i,j}=0$,
 \item for all $i=2,\ldots,d$,
%$\mathcal{M}$
% can be partitioned as 
% \[
%         \begin{split}
%          & A=\begin{bmatrix} A_0 & B_{w} \\ C_z & 0 \end{bmatrix}, ~ B=\begin{bmatrix} B_0 \\ D_{zu} \end{bmatrix}, C=\begin{bmatrix} C_0 & D_{yw} \end{bmatrix} \\
%           & B_w = \begin{bmatrix} B_{w_1},& \ldots, & B_{w_d}  \end{bmatrix}, \quad D_{yw} = \begin{bmatrix} D_{yw_1},& \ldots, & D_{yw_d}  \end{bmatrix}, \\
%           & C_z=\begin{bmatrix} C_{z_1}^\top & \ldots, & C_{z_d}^\top \end{bmatrix}^\top, \quad 
%          D_{zu}= \begin{bmatrix} D_{zu_1}^\top & \ldots, & D_{zu_d}^\top \end{bmatrix}^\top, \\
%         \end{split}
%  \]
%  where $C_{z_i} \in \mathbb{R}^{n_{i+1} \times n_1}$,  $D_{zu_i} \in \mathbb{R}^{n_{i+1} \times m}$,
%  $B_{w_i} \in \mathbb{R}^{n_1 \times n_{i+1}}$, 
%  $D_{yw_i} \in \mathbb{R}^{p \times n_{i+1}}$, $i=1,\ldots, d-1$ and 
  \begin{equation}
			\label{LFRpatrialMatrices}
	\begin{bmatrix}
	A_{i-1} & B_{i-1} \\
	C_{i-1} & D_{i-1}  \\
	\end{bmatrix} = 
	\begin{bmatrix}
	F_{1,i} \\
	H_{i}  \\
	\end{bmatrix}
	\begin{bmatrix}
	F_{i,1} & G_i \\
	\end{bmatrix}.
	\end{equation}
      %and  the matrices   $\begin{bmatrix}C_{z_i} & D_{zu_i}\end{bmatrix}$ and $\begin{bmatrix}
      %		B_{w_i}^\top&C_{z_i}^\top
      %		\end{bmatrix}^\top$ are full row and column rank respectively, for all $i=1,\ldots,d$.
\end{enumerate}
\end{Definition}
 The intuition behind Definition \ref{LPV2LFR} is as follows. A solution $(x,y,u,p)$, $x(0)=0$ of the ALPV model $\Sigma$ corresponds 
 to a solution of the LFR $\mathcal{M}$ for the following  choice of $\Delta$
\begin{equation}
\label{DeltaScheduling}
 \Delta=\Delta(p)= \diag{[\lambda I_{n_1}, \delta_1 I_{n_2},\ldots,\delta_d I_{n_{d}}]},
\end{equation}
 with $\delta_i,\lambda: (\mathbb{R})^{\mathbb{N}} \rightarrow (\mathbb{R})^{\mathbb{N}}$ defined by
 $\delta_i(h)(t)=p_i(t)h(t)$ and $\lambda(h)(t)=\left\{\begin{array}{rl} h(t-1) & t > 0 \\ 0 & t=0 \end{array}\right.$ for any sequence $h \in (\mathbb{R}^{\mathbb{N}})$ and $t \in \mathbb{N}$, \emph{i.e.},
 \begin{equation*}
\label{Mmatrix1}
\begin{split}
& \begin{bmatrix} \Delta(p) \begin{bmatrix} x \\ z \end{bmatrix} \\ y \end{bmatrix} =\begin{bmatrix}
A_0 & B_w & B_0\\
C_{z} & 0 & D_{zu} \\
C_0 & D_{yw} & D_0 \\
\end{bmatrix} \begin{bmatrix} x  \\ w   \\  u \end{bmatrix},  \\
& B_w=\begin{bmatrix} F_{1,2}, & \ldots, & F_{1,d} \end{bmatrix}, ~ C_z=\begin{bmatrix} F_{2,1}^\top,& \ldots,& F_{d,1}^\top \end{bmatrix}^\top, \\
& D_{zu}=\begin{bmatrix} G_{2}^\top,& \ldots,& G_{d}^\top \end{bmatrix}^\top, ~ D_{yw}=\begin{bmatrix} H_{2},& \ldots,& H_{d} \end{bmatrix}. \\
%& \begin{bmatrix} v & w =\Deltaz(t), \quad 
\end{split}
\end{equation*} 
 The specific form of LFRs calculated from ALPVs serves as a motivation to define the following subset of LFRs.
 \begin{Definition}[LPV-LFR]
  An LPV-LFR is an LFR of the form \eqref{LFRform}, 
  %\begin{equation}
  % \label{LPVLFRform}
  % \mathcal{M}=(p,m, d+1, \{n_i\}_{i=1}^{d+1},A,B,C,D), 
  %\end{equation}
    where $d > 1$, and the canonical decomposition $\{(H_i, F_{i,j},G_j)\}_{i,j=1}^{d}$ of $\mathcal{M}$ has the property that
    $F_{i,j}=0$, if $i > 1$ and $j > 1$. 
  %the matrices $A,B,C$ admit a decomposition
  %\begin{equation}
  %\label{Mmatrix}
  % \begin{split}
  % & A=\begin{bmatrix} A_0 & B_{w} \\ C_z & 0 \end{bmatrix}, ~ B=\begin{bmatrix} B_0 \\ D_{zu} \end{bmatrix}, C=\begin{bmatrix} C_0 & D_{yw} \end{bmatrix}.
 % \%end{split}
 % \%end{equation}
 %where $A_0 \in \mathbb{R}^{n_1 \times n_1}$, $B_w \in \mathbb{R}^{n_1 \times n_z}$,  $C_z \in \mathbb{R}^{n_z \times n_1}$, $B_0 \in \mathbb{R}^{n_1 \times m}$, 
 %$D_{zu} \in \mathbb{R}^{n_z \times m}$, $C_0 \in \mathbb{R}^{p \times n_1}$, $D_{yw} \in \mathbb{R}^{p \times n_z}$,
 %$n_z=n_2+\cdots+n_{d}$.  
\end{Definition}
 It then follows that an LFR calculated from an ALPV model is an LPV-LFR.
 Note that not only ALPV models can be transformed to LPV-LFR models, but there is a transformation in the reverse direction.  
  \begin{Definition}[From LPV-LFR to ALPV]
 \label{LFR2LPV}
  Let $\mathcal{M}$ be an LPV-LFR  of the form \eqref{LFRform} and let $\{(H_i, F_{i,j},G_j)\}_{i,j=1}^{d}$ be its canonical
  decomposition. We can define the \emph{ALPV $\Sigma_{\mathcal{M}}$ which corresponds to $\mathcal{M}$} as 
  the ALPV of the form \eqref{equ:FALPVSystem}, such that $\NP=d-1$, $\NX=n_1$, $\NY=p,\NU=m$, $D=D_0$, 
  $A_0=F_{1,1},B_{0}=G_1$, $C_0=H_1$, and for all $i=1,\ldots,\NP$, 
 \[ 	\begin{bmatrix}
	A_i & B_i \\
	C_i & D_i  \\
	\end{bmatrix} = 
	\begin{bmatrix}
	F_{i+1,1} \\
	H_{i+1,1}  \\
	\end{bmatrix}
	\begin{bmatrix}
	F_{1,i+1} & G_{i+1} \\
 	\end{bmatrix}.
  \]
   %where the matrices $C_{z_i} \in \mathbb{R}^{n_{i+1} \times n_1}$,  $D_{zu_i} \in \mathbb{R}^{n_{i+1} \times m}$,
  %$B_{w_i} \in \mathbb{R}^{n_1 \times n_{i+1}}$, 
  %$D_{yw_i} \in \mathbb{R}^{p \times n_{i+1}}$ are such that
   %\[
   %\begin{split}
   %& A=\begin{bmatrix} A_0 & B_{w} \\ C_z & 0 \end{bmatrix}, ~ B=\begin{bmatrix} B_0 \\ D_{zu} \end{bmatrix}, C=\begin{bmatrix} C_0 & D_{yw} \end{bmatrix} \\
    %       & B_w=\begin{bmatrix} B_{w_1},& \ldots, & B_{w_d}  \end{bmatrix}, ~ C_z = \begin{bmatrix} C_{z_1}^\top & \ldots, & C_{z_d}^\top \end{bmatrix}^\top, \\
    %       & D_{zu} =\begin{bmatrix} D_{zu_1}^\top & \ldots, & D_{zu_d}^\top \end{bmatrix}^\top, ~D_{yw}= \begin{bmatrix} D_{yw_1},& \ldots, & D_{yw_d} \end{bmatrix}
    %\end{split}
  %\]
 \end{Definition}
 Note that, while there are many ways to transform an ALPV to and LPV-LFR, each LPV-LFR gives rise to a single ALPV.
 The operation of transforming an LPV-LFR to an ALPV is in a sense the inverse of the transformation of an ALPV to LPV-LFR:
 if $\mathcal{M}$ is an LPV-LFR calculated from an ALPV $\Sigma$ using Definition \ref{LPV2LFR}, 
 then $\Sigma_{\mathcal{M}}=\Sigma$. However,
 $\mathcal{M}$ is any LPV-LFR, then the LPV-LFRs calculated from $\mathcal{M}_{\Sigma}$ using Definition \ref{LPV2LFR} 
are in general different from $\mathcal{M}$ and they need not even be isomorphic to $\mathcal{M}$.  

  That is, ALPVs yield LPV-LFRs and LPV-LFRs can be converted to ALPVs. Intuitively, the conversion is such that one could use
  control design techniques for LFRs to control ALPVs. If this path is taken, then the sole use of ALPV models is to serve
  as a source of LFR models, and hence instead of identifying ALPV models, one could identify LFR models directly.
  Unfortunately, as we will see in the next section, without further restrictions on the transformation from ALPV to LFR,
  this may lead to inconsistent results.

  \subsection{Inconsistency of the ALPV to LFR transformation: motivating example}
   \label{problem:form} 
   The transformations of Definition \ref{LPV2LFR} and Definition \ref{LFR2LPV} give rise to a number of fundamental
   questions, which have implications for control and system identification.  
   %To begin with, the relationship between
   %the input-output functions of ALPVs and the formal input-output map of the LFRs computed from them is not clear.
 % \textbf{ Can it happen that two ALPVs which are input-output equivalent yield LFRs which are not formaly  input-output equivalent ?}
 %  If the answer is affirmative, then we cannot expect that the same controller will work for both LFRs, especially if one
 %  uses methods from robust control, which are supposed to work for any choice of the uncertainity block $\Delta$. 
 %  An affirmative answer can lead to problems for identification of LFRs, if the measured data correspond to 
 %  solutions of the LFR which correspond to $\Delta=\Delta(p)$ from \eqref{DeltaScheduling}.
 %  Note that the transformation in Definition \ref{LPV2LFR} is not unique, so the question cannot be dimissed even for
 %  the case of one single ALPV.
   To illustrate these problems, let us consider the following example.
   \begin{Example}[Motivating example]
   \label{Example1}
    Let us consider the \ALPV model $\Sigma$ of the form \eqref{equ:FALPVSystem}, such that $\NP = 1, \NX=2, \NU = \NY =1$, with the following model matrices
%We take example~1 from \cite{Alkhoury15}, where it is proven that this system is minimal.
\begin{align*}
A_0 = \begin{bmatrix}
1& 0\\ 0 &0.2
\end{bmatrix}, B_0=\begin{bmatrix} 1\\0\end{bmatrix}, C_0^\top = \begin{bmatrix}
1 \\0
\end{bmatrix},\\
A_1 = \begin{bmatrix}
0&2\\1&1
\end{bmatrix}, B_1=\begin{bmatrix}0\\1\end{bmatrix}, C_1^\top = \begin{bmatrix}
0 \\1
\end{bmatrix},
\end{align*}
and $D_0 = D_1 = 0$.  Consider now the following two LFRs,
\[ 
 \begin{split}
 & \mathcal{M}=(1,1,2,\{2,3\}, (A,B,C,0)), \\
 & \mathcal{\tilde{M}}=(1,1,2,\{2,3\}, (\tilde{A},\tilde{B},\tilde{C},0)) , \\
%A=\begin{bmatrix} 1 & 0 &  -1.196  &  0.5429 \\
%                  0 & 0.2 &  -0.8668 &-0.9364 \\
%                   -0.3413  & -1.519 & 0 & 0 \\
%                 -0.752   &  0.338   & 0 & 0 
%\end{bmatrix}
%& B=\begin{bmatrix} 1 & 0 & -0.3413 &  -0.752 \end{bmatrix}^T \\
%& C=\begin{bmatrix}  
%1  &  0    &    -0.598  & 0.2714 \end{bmatrix} \\
& A= \begin{bmatrix} 1 & 0 & 1   &  0  &    1 \\
                               0 & 0.2 & 1 &    0  &   0  \\
                               1 &  1 & 0 & 0 & 0 \\
                               0 &  2 & 0 & 0 & 0 \\
                               -1 &  1 & 0 & 0 & 0 
                \end{bmatrix}, ~ B=\begin{bmatrix} 1 \\ 0 \\ 1  \\ 200  \\ -1 \end{bmatrix},\\
& C=\begin{bmatrix} 1 & 0 & 0.5 & 0 & 0.5 \end{bmatrix}, \\
& \tilde{A}= \begin{bmatrix} 1 & 0 & 0   &  1  &    0 \\
                               0 & 0.2 & 1 &    0  &   0  \\
                               1 &  1 & 0 & 0 & 0 \\
                               0 &  2 & 0 & 0 & 0 \\
                               0 &  -2 & 0 & 0 & 0 
                \end{bmatrix}, ~ \tilde{B}=\begin{bmatrix} 1 \\ 0 \\ 1 \\ 0  \\  1 \end{bmatrix}, \\
& \tilde{C}=\begin{bmatrix} 1 & 0 & 0 & 0.5 & 0 \end{bmatrix}.
\end{split}
\]
It is easy to see that both $\mathcal{M}$ and $\mathcal{\tilde{M}}$ satisfy Definition \ref{LPV2LFR}, yet the matrices
are completely different. In fact, it is easy to see that $\mathcal{M}$ and $\mathcal{\tilde{M}}$ are not
isomorphic. At a first glance, it is not clear that these two LFRs are in fact formally input-output equivalent. 
Concerning system identification, this example raises the question as \textbf{to how to distinguish between these two
LFRs, since clearly they originate from the same ALPV, and hence their behavior for $\Delta=\Delta(p)$ from \eqref{DeltaScheduling}
should be the same}.  
From the point of view of control, these two LFRs behave quite differently. Using classical $H_{\infty}$ control, 
we computed LTI controllers for both LFRs which render the closed-loop $\gamma$-stable. If this controller is applied to
the original ALPV, then it renders it stable for all scheduling sequences $p \in \mathcal{P}$ satisfying
$|p(t)| < \gamma$ for all $t \in \mathbb{N}$. The largest $\gamma$ we could get with $\mathcal{M}$ is $\frac{1}{4}$, while
the largest $\gamma$ we could get for $\mathcal{\tilde{M}}$ is $\frac{1}{281}$.
That is, the \textbf{guaranteed performance of the controller depends on the choice of the LFR!}

The situation becomes even more interesting, if we notice that the following LFR
$\mathcal{\hat{M}}=(1,1,2,\{2,2\},\hat{A},\hat{B},\hat{C},0)$ with 
\[ 
 \begin{split}
& \hat{A}=\begin{bmatrix} 1 & 0 &  -1.196  &  0.5429 \\
                  0 & 0.2 &  -0.8668 &-0.9364 \\
                   -0.3413  & -1.519 & 0 & 0 \\
                 -0.752   &  0.338   & 0 & 0 
\end{bmatrix}, \\
&  \hat{B}=\begin{bmatrix} 1 & 0 & -0.3413 &  -0.752 \end{bmatrix}^T, \\
& \hat{C}=\begin{bmatrix}  1  &  0    &    -0.598  & 0.2714 \end{bmatrix}.
\end{split}
\]
also satisfies Definition \ref{LPV2LFR}. The dimension of $\mathcal{\hat{M}}$ is smaller than the dimension of
$\mathcal{M}$ and $\mathcal{\tilde{M}}$. This means
$\mathcal{M}$ and $\mathcal{\tilde{M}}$ are not minimal dimensional LFR representations of the ALPV $\Sigma$, and hence
we might be tempted to think that our problems are caused by parasitic dynamics which are present in 
$\mathcal{M}$ and $\mathcal{\tilde{M}}$, but which are absent from the ALPV $\Sigma$. 
But how can we be sure $\mathcal{\hat{M}}$ is itself minimal? How to modify Definition \ref{LPV2LFR}, so that
we cannot get LFRs of higher dimension than it is strictly necessary?
%What is the minimal dimensional LFR representation of an ALPV ?
What if we can find another minimal dimensional LFR which satisfies Definition \ref{LPV2LFR} and which is not
isomorphic to $\mathcal{\hat{M}}$?  
\end{Example}

\subsection{Problem formulation}\label{sec:problem_formulation}
The questions raised above can be addressed by answering the following questions:
\begin{enumerate}
\item Is it true that two ALPVs which are input-output equivalent yield LFRs which are formally  input-output equivalent?
  %If the answer is negative, then we cannot expect that the same controller will work for both LFRs, especially if one
  %uses methods from robust control, which are supposed to work for any choice of the uncertainity block $\Delta$. 
  % An affirmative answer can lead to problems for identification of LFRs, if the measured data correspond to 
  % solutions of the LFR which correspond to $\Delta=\Delta(p)$ from \eqref{DeltaScheduling}.
\item Can we modify Definition \ref{LPV2LFR} so that minimal ALPVs get transformed to minimal LFRs in the sense of
           Theorem \ref{lfr:theo}, and that this transformation preserves isomorphism? 

\item Is the ALPV calculated from a minimal LPV-LFR according to Definition \ref{LFR2LPV} minimal?
\item Can we transform an LPV-LFR to a minimal LFR which is also an LPV-LFR?

\end{enumerate}

 If the answers to these questions are positive, then the situation in Example \ref{Example1} can be handled easily,
 by using Theorem \ref{lfr:theo} and Theorem \ref{min:theo}.
 Namely, it is enough to restrict attention to minimal ALPVs and LFRs. Then, the modified transformation will guarantee that minimal and input-output equivalent ALPVs are transformed to minimal and formally input-output equivalent, and thus isomorphic LFRs.
 That is, the result of transforming ALPVs to LFRs is essentially unique, when minimal ALPVs are concerned. 
 From Remark \ref{rem:min}, it then follows that the result of control synthesis will not depend on which particular
 minimal ALPV or LFR was chosen, as long as the chosen ALPV is input-output equivalent to the original one. 
 If one would like to bypass ALPVs altogether, then one should work with minimal LPV-LFRs, where minimality is understood in the sense of Theorem \ref{lfr:theo}.   Then, if two minimal LPV-LFRs describe two input-output equivalent ALPVs, then they will be
 isomorphic, and hence equivalent for control synthesis. Moreover, with some more work we can show that identifiability
of LPV-LFRs is equivalent to that of ALPVs, so for identification it will not matter which modelling framework is used. 

In the next section we state the answers to the questions above formally.

\section{Equivalence between ALPV and LFR-LPV: preservation of input-output behavior, minimality and identifiability}
\label{sec:Connections_minimality_LPV_LFPV}

We start with presenting a special case of Definition \ref{LPV2LFR}, which will have some useful properties.
\begin{Definition}[MR factorization]
Let $\Sigma$ be an ALPV of the form \eqref{equ:FALPVSystem}.
We say that an LFR $\mathcal{M}$ of the form \eqref{LFRform} is \emph{calculated by Matrix Full Rank (MR) factorization from $\Sigma$},
if $\mathcal{M}$ satisfies Definition \ref{LPV2LFR}, and in addition, for any $i=2,\ldots,d$,
$\begin{bmatrix} F_{i,1} & G_{i}\end{bmatrix}$ and $\begin{bmatrix}
      		F_{1,i}^\top& H_{i}^\top \end{bmatrix}^\top$ are full row and column rank respectively, where
$\{(H_i,F_{i,j},G_j)\}_{i,j=1}^d$ is the canonical partitioning of $\mathcal{M}$.
\end{Definition}
That is, the only distinguishing feature of MR factorization is that it explicitly requires the
 factorization \eqref{LFRpatrialMatrices} to be full rank. 
 %\begin{Remark}%[Uniqueness of LPV-LFR calculated by MR factorization]
  Note that LPV-LFRs calculated by MR factorization are unique up to isomorphism. 
%Indeed, if the matrices $B_{w},D_{yw},C_{z}, D_{zu}$ satisfy \eqref{LFRpartialMatrices}, then so do
%  $B_wS,D_{yw}S,S^{-1}C_z,SD_{zu}$, for any matrix $S=\diag{[S_1,\ldots,S_d]}$, where $S_i \in \mathbb{R}^{n_{i+1} \times n_{i+1}}$ and $S_i$ is non-singular for all
 % $i=1,\ldots,d$, and all other solutions of \eqref{LFRpartialMatrices} can be obtained from $B_w,D_{yw},C_z,D_{zu}$ in this way. 
 %\end{Remark}
 %It is easy to see that the transformation in Definition \ref{LPV2LFR} preserves input-output behavior:
 %\begin{Lemma}
 %\label{theo1}
 %  Let $\Sigma$ be an ALPV and let $\mathcal{M}$ be an LPV-LFR calculated from $\Sigma$ by a MR factorization. %from Definition \ref{LPV2LFR}. 
   %For any $u \in \mathcal{U}, p \in \mathcal{P}$, $(x,p,u,y)$ is a solution of $\Sigma$ if and only if
   %$(x,z,p,u,y)$ is a solution of $\mathcal{M}_{\Sigma}$ with $z=C_zx+D_{zu}u$. 
 %   It is easy to see that the input-output maps of
 %  $\Sigma$ and $\mathbb{Y}_{\mathcal{M}_{\Sigma}}$ coincide. 
  %$Y_{\Sigma}=\mathbf{Y}_{\Sigma}$.
 %\end{Lemma}
% Theorem \ref{theo1} is a simple, and well-known result. 
%However, this result does not tell us what happens if the uncertainty block $\Delta(p(t))$ of $\mathcal{M}_{\Sigma}$
% is replaced by uncertainty block which contains uncertainties others than the one induced by scheduling parameters. Moreover, it remains unclear if properties like isomorphism, minimality, etc., are preserved when passing from ALPVs to LFRs.

 The next theorem, which is the main result of the paper, tells us that MR transformation preserves minimality and isomorphism, and
 it maps input-output equivalent ALPVs to formally input-output equivalent LFRs. 

\begin{Theorem}[Transforming ALPV to LPV-LFRs]\label{th:LFPVMinimality}
       Let $\Sigma,\Sigma_1,\Sigma_2$ be ALPVs and let $\mathcal{M},\mathcal{M}_1,\mathcal{M}_2$ be an LPV-LFR calculated from $\Sigma,\Sigma_1,\Sigma_2$ respectively by MR factorization.
    \begin{enumerate}
    \item
       $\Sigma$ is a minimal ALPV $\iff$ $\mathcal{M}$ is a  minimal LFR.
    \item 
	The \ALPVs $\Sigma_1$ and $\Sigma_2$ are input-output equivalent $\iff$
        $\mathcal{M}_{1}$ and $\mathcal{M}_{2}$ are formally input-output equivalent.
   \item 
      $\Sigma_1$ and $\Sigma_2$ are isomorphic $\iff$ the corresponding LFRs $\mathcal{M}_{1}$ and $\mathcal{M}_{2}$ are isomorphic. 
   \end{enumerate}
\end{Theorem}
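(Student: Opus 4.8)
The plan is to prove the three equivalences almost independently, exploiting the block structure of an LPV-LFR together with the reachability/observability characterisations of minimality in Theorem \ref{min:theo} and Theorem \ref{lfr:theo}. I expect statement 1 to be the technical core, so I would attack it first by relating the reachability and observability spaces of $\mathcal{M}$, block by block, to those of $\Sigma$. Writing $\{(H_i,F_{i,j},G_j)\}$ for the canonical partitioning and using $F_{i,j}=0$ for $i,j>1$, a direct induction on the recursions defining $\mathcal{R}_k^i(\mathcal{M})$ shows that for $i>1$ one has $\IM \mathcal{R}_{k+1}^i(\mathcal{M})=\IM G_i + F_{i,1}\,\IM\mathcal{R}_k^1(\mathcal{M})$, while for the first block $V_k:=\IM\mathcal{R}_k^1(\mathcal{M})$ satisfies $V_{k+1}=\sum_{i=0}^{\NP}\IM B_i + A_0 V_k+\sum_{i=1}^{\NP}A_i V_{k-1}$, after substituting $A_{i-1}=F_{1,i}F_{i,1}$ and $B_{i-1}=F_{1,i}G_i$ from \eqref{LFRpatrialMatrices}. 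The sequence $V_k$ is increasing, stays inside the ALPV span-reachability space $R$ (the smallest $A_0,\dots,A_{\NP}$-invariant subspace containing $\IM B_0,\dots,\IM B_{\NP}$), and its limit is a fixed point containing every $\IM B_i$ and invariant under every $A_i$; hence $V_\infty=R$, so the first block of $\mathcal{M}$ is reachable iff $\Sigma$ is span-reachable. Passing to the limit in the auxiliary blocks gives $\IM G_i+F_{i,1}R$, which when $\Sigma$ is span-reachable equals $\IM\begin{bmatrix} F_{i,1} & G_i\end{bmatrix}=\mathbb{R}^{n_i}$ precisely because the MR condition makes $\begin{bmatrix} F_{i,1} & G_i\end{bmatrix}$ full row rank. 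The observability half follows by the dual argument (transpose everything; the transpose of an MR LPV-LFR is again one), and combining with Theorem \ref{min:theo} and Theorem \ref{lfr:theo} yields statement 1.

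For statement 3 I would build the LFR isomorphism from an ALPV one and conversely. Given an ALPV isomorphism $\MORPH$ from $\Sigma_1$ to $\Sigma_2$, conjugating the factorization \eqref{LFRpatrialMatrices} of $\mathcal{M}_1$ by $\mathrm{diag}[\MORPH,I]$ exhibits a second full-rank factorization of $\begin{bmatrix} A_{i-1}^{(2)} & B_{i-1}^{(2)} \\ C_{i-1}^{(2)} & D_{i-1}^{(2)}\end{bmatrix}$; by uniqueness of full-rank factorization up to an invertible matrix there is an invertible $T_i$ with $F_{1,i}^{(2)}T_i=\MORPH F_{1,i}^{(1)}$, $H_i^{(2)}T_i=H_i^{(1)}$, $F_{i,1}^{(2)}=T_i F_{i,1}^{(1)}\MORPH^{-1}$ and $G_i^{(2)}=T_i G_i^{(1)}$. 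Setting $T_1=\MORPH$ and $T=\mathrm{diag}[T_1,\dots,T_d]$, a block-by-block check confirms $T_iF_{i,j}^{(1)}T_j^{-1}=F_{i,j}^{(2)}$, $H_i^{(2)}=H_i^{(1)}T_i^{-1}$, $G_i^{(2)}=T_iG_i^{(1)}$ and $D^{(2)}=D^{(1)}$, i.e. $T$ is an LFR isomorphism. Conversely, from any LFR isomorphism $T=\mathrm{diag}[T_1,\dots,T_d]$ one recovers an ALPV isomorphism by taking $\MORPH=T_1$ and substituting the isomorphism relations into $A_i=F_{1,i+1}F_{i+1,1}$, $B_i=F_{1,i+1}G_{i+1}$, $C_i=H_{i+1}F_{i+1,1}$, $D_i=H_{i+1}G_{i+1}$: the auxiliary factors $T_{i+1}$ cancel, leaving exactly the ALPV isomorphism identities. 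The forward direction uses MR through uniqueness of full-rank factorization, whereas the converse needs only Definition \ref{LPV2LFR}.

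For statement 2 I would route everything through the ALPV Markov parameters. Expanding $Y_{\Sigma}(u,p)$ in the input and in the monomials of the scheduling signal shows that $Y_{\Sigma_1}=Y_{\Sigma_2}$ is equivalent to equality of all parameters $D_j$ and $C_{j_0}A_{j_1}\cdots A_{j_l}B_{j_{l+1}}$ with $j_\bullet\in\{0,\dots,\NP\}$. Substituting $A_0=F_{1,1}$, $A_{j}=F_{1,j+1}F_{j+1,1}$, $B_j=F_{1,j+1}G_{j+1}$, $C_j=H_{j+1}F_{j+1,1}$ (and the degenerate relations at index $0$) telescopes each such product into a value $Y_{\mathcal{M}}(s)=H_{i_k}F_{i_k,i_{k-1}}\cdots F_{i_2,i_1}G_{i_1}$ for a word $s$ in which no two adjacent letters exceed $1$; conversely, since $F_{i,j}=0$ for $i,j>1$, every word with $Y_{\mathcal{M}}(s)\neq 0$ corresponds to a unique Markov parameter. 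This value-preserving correspondence gives $Y_{\mathcal{M}_1}=Y_{\mathcal{M}_2}$ iff $\Sigma_1$ and $\Sigma_2$ share all Markov parameters, which is statement 2 (and it again needs only Definition \ref{LPV2LFR}).

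The hard part will be statement 1: the first-block recursion is asymmetric in $V_k$ and $V_{k-1}$, so some care is required to identify its limit with the span-reachability space $R$, and the auxiliary blocks are exactly where the full-rank MR hypothesis must enter. Without that hypothesis the transformed LFR can fail to be reachable even when $\Sigma$ is minimal, which is the phenomenon behind the non-minimal representations in Example \ref{Example1}. The word-to-Markov-parameter bookkeeping in statement 2 is routine, but must be organised carefully enough to confirm it is a genuine value-preserving bijection.
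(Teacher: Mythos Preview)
Your proposal is correct and follows essentially the same route as the paper's sketch: part~1 via the same block-reachability recursions (the paper packages them as the sandwich $\IM\mathcal{R}_k^1(\mathcal{M})\subseteq\IM\mathcal{R}_k(\Sigma)\subseteq\IM\mathcal{R}_{2k+1}^1(\mathcal{M})$ with an explicit step bound rather than your fixed-point identification $V_\infty=R$, but invokes the MR hypothesis at exactly the same place, namely the auxiliary blocks $i>1$), and part~2 via the identical Markov-parameter/word correspondence. Part~3 the paper simply leaves to the reader, so your uniqueness-of-full-rank-factorization construction in fact supplies more detail than the paper does.
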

 %That is, the transformation from ALPVs to LFRs preserves isomorphism, minimality and in fact represents a one-to-one map between input-output behaviors of
 %ALPVs and LFRs.  

Note that Theorem \ref{th:LFPVMinimality} allows us to derive the following useful properties for the transformation from
LPV-LFRs to ALPVs.
%The situation changes when we restrict attention to minimal LFRs.
 \begin{Theorem}[Transformation from LPV-LFR to ALPV]
 \label{LFR2LPV:theo}
 Let $\mathcal{M},\mathcal{\tilde{M}},\mathcal{\hat{M}}$ be LPV-LFRs, and let $\Sigma=\Sigma_{\mathcal{M}}$, $\tilde{\Sigma}=\Sigma_{\mathcal{M}}$  
be the ALPVs associated with $\mathcal{M}$ and $\mathcal{\tilde{M}}$ respectively. %Then the following holds.
 \begin{itemize}
 \item[{\textbf{(i)}}] If $\mathcal{M}$ is minimal, then $\Sigma$ is a minimal ALPV.
                   
 \item[{\textbf{(ii)}}] $\mathcal{M}$ and $\mathcal{\tilde{M}}$ are formally input-output equivalent, if and only if 
                       $\Sigma$ and $\tilde{\Sigma}$ are input-output equivalent.

 \item[{\textbf{(ii)}}] If $\mathcal{M}$ and $\mathcal{\tilde{M}}$ are isomorphic, then 
                       $\Sigma$ and $\tilde{\Sigma}$ are isomorphic.

 \item[{\textbf{(iv)}}] Let $\mathcal{\hat{M}}$ be the LPV-LFR  computed from $\Sigma$ by MR factorization. Then
                      $\mathcal{M}$ and $\mathcal{\hat{M}}$ are \emph{formally} input-output equivalent. 
                     If $\mathcal{M}$ is minimal, then so is $\mathcal{\hat{M}}$ and it is isomorphic to $\mathcal{M}$.
 \end{itemize}
 \end{Theorem}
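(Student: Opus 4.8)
The plan is to reduce every assertion to the already-proved forward result, Theorem \ref{th:LFPVMinimality}, by confronting a given LPV-LFR $\mathcal{M}$ with an MR factorization $\hat{\mathcal{M}}$ of its associated ALPV $\Sigma=\Sigma_{\mathcal{M}}$. The bridge between the two directions is the identity $\Sigma_{\hat{\mathcal{M}}}=\Sigma_{\mathcal{M}}=\Sigma$: since $\hat{\mathcal{M}}$ is calculated from $\Sigma$ via Definition \ref{LPV2LFR} (MR being a special case), the forward-then-reverse round trip returns the original ALPV, as noted before Definition \ref{LFR2LPV}. Thus $\mathcal{M}$ and $\hat{\mathcal{M}}$ share the \emph{same} associated ALPV, and the whole theorem hinges on comparing their formal input-output maps.

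The key lemma, which I expect to be the main obstacle, is that for \emph{any} LPV-LFR $\mathcal{M}$ the formal input-output map $Y_{\mathcal{M}}$ depends only on $\Sigma_{\mathcal{M}}$ and not on the particular factorization chosen. The proof exploits the defining sparsity of LPV-LFRs, namely $F_{i,j}=0$ whenever $i,j>1$. Expanding $Y_{\mathcal{M}}(s)=H_{i_k}F_{i_k,i_{k-1}}\cdots F_{i_2,i_1}G_{i_1}$ for a word $s=i_1\cdots i_k$, one sees that a nonzero contribution forces the symbol $1$ to separate any two scheduling symbols $>1$. Consequently every scheduling symbol $i_j>1$ is flanked by $1$, so its two adjacent factors always combine into a single block: an interior $i_j$ produces $F_{1,i_j}F_{i_j,1}$, a left-boundary one $H_{i_j}F_{i_j,1}$, a right-boundary one $F_{1,i_j}G_{i_j}$, while runs of $1$'s give powers of $F_{1,1}$ bordered by $H_1,G_1$. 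By Definition \ref{LFR2LPV} these products are precisely $A_{i_j-1}$, $C_{i_j-1}$, $B_{i_j-1}$, $A_0$, $C_0$, $B_0$ of $\Sigma_{\mathcal{M}}$, so $Y_{\mathcal{M}}(s)$ is a fixed product of the matrices of $\Sigma_{\mathcal{M}}$, independent of the individual factors. A short induction on $|s|$ formalizes the cancellation pattern; the same computation already underlies the proof of Theorem \ref{th:LFPVMinimality}. Applying this to both $\mathcal{M}$ and $\hat{\mathcal{M}}$ and using $\Sigma_{\hat{\mathcal{M}}}=\Sigma$ yields $Y_{\mathcal{M}}=Y_{\hat{\mathcal{M}}}$, which is the first assertion of (iv).

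Given this lemma, the remaining items are short deductions. For (i), the MR factorization of $\Sigma$ has block sizes $\hat n_1=n_1$ and $\hat n_i=\rank \begin{bmatrix}A_{i-1}&B_{i-1}\\ C_{i-1}&D_{i-1}\end{bmatrix}\le n_i$ for $i\ge 2$, since in $\mathcal{M}$ the same block is factored through inner dimension $n_i$; hence $\dim\hat{\mathcal{M}}\le\dim\mathcal{M}$. As $\hat{\mathcal{M}}$ is formally input-output equivalent to the minimal $\mathcal{M}$, also $\dim\mathcal{M}\le\dim\hat{\mathcal{M}}$, so equality holds and $\hat{\mathcal{M}}$ is minimal; by Theorem \ref{th:LFPVMinimality}(1) this forces $\Sigma$ to be minimal. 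For (ii), let $\hat{\mathcal{M}}_1,\hat{\mathcal{M}}_2$ be MR factorizations of $\Sigma,\tilde{\Sigma}$; then $Y_{\mathcal{M}}=Y_{\hat{\mathcal{M}}_1}$ and $Y_{\tilde{\mathcal{M}}}=Y_{\hat{\mathcal{M}}_2}$, so $\mathcal{M},\tilde{\mathcal{M}}$ are formally equivalent iff $\hat{\mathcal{M}}_1,\hat{\mathcal{M}}_2$ are, which by Theorem \ref{th:LFPVMinimality}(2) happens iff $\Sigma,\tilde{\Sigma}$ are input-output equivalent. For (iii), a block-diagonal LFR isomorphism $T=\diag{[T_1,\ldots,T_d]}$ acts by $\tilde F_{i,j}=T_iF_{i,j}T_j^{-1}$, $\tilde G_i=T_iG_i$, $\tilde H_i=H_iT_i^{-1}$; substituting into the defining factorization of $\Sigma_{\mathcal{M}}$, the inner-block transforms appear as $T_{i+1}^{-1}T_{i+1}$ and cancel, leaving $\mathcal{T}:=T_1$ as an ALPV isomorphism from $\Sigma$ to $\tilde{\Sigma}$. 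For the second sentence of (iv), (i) gives $\Sigma$ minimal, Theorem \ref{th:LFPVMinimality}(1) then gives $\hat{\mathcal{M}}$ minimal, and being minimal and formally input-output equivalent to the minimal $\mathcal{M}$, it is isomorphic to $\mathcal{M}$ by Theorem \ref{lfr:theo}.

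In summary, the only genuine computation is the key lemma, where the sparsity-induced cancellations must be tracked to prove factorization-invariance of $Y_{\mathcal{M}}$; once that is in place, parts (i)--(iv) reduce to dimension counting and direct invocations of Theorems \ref{th:LFPVMinimality} and \ref{lfr:theo}.
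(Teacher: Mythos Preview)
Your proposal is correct and in its core computation coincides with the paper: your ``key lemma'' --- that the LPV-LFR sparsity forces every nonzero value $Y_{\mathcal{M}}(s)$ to collapse into a product of the matrices $A_i,B_i,C_i,D_i$ of $\Sigma_{\mathcal{M}}$, hence into a Markov parameter of $\Sigma_{\mathcal{M}}$ --- is exactly the calculation the paper uses to prove part \textbf{(ii)} (and part (2) of Theorem \ref{th:LFPVMinimality}). Your treatment of \textbf{(iii)} and of \textbf{(iv)} then follows the paper's line for line.

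The one genuine difference is part \textbf{(i)}. The paper proves it directly from reachability and observability: the inclusions $\IM\mathcal{R}_k^1(\mathcal{M})\subseteq\IM\mathcal{R}_k(\Sigma)$ in \eqref{pf:eq1} hold for \emph{any} LPV-LFR (no MR hypothesis needed), so reachability of $\mathcal{M}$ immediately gives span-reachability of $\Sigma$, and dually for observability. You instead pass through the MR factorization $\hat{\mathcal{M}}$: the rank bound gives $\hat n_i\le n_i$, hence $\dim\hat{\mathcal{M}}\le\dim\mathcal{M}$; the key lemma gives $Y_{\hat{\mathcal{M}}}=Y_{\mathcal{M}}$; minimality of $\mathcal{M}$ then forces $\dim\hat{\mathcal{M}}=\dim\mathcal{M}$, so $\hat{\mathcal{M}}$ is minimal, and Theorem \ref{th:LFPVMinimality}(1) yields minimality of $\Sigma$. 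Both arguments are sound. Yours is pleasingly uniform --- every item becomes a corollary of the single key lemma plus the forward theorem --- whereas the paper's route establishes \textbf{(i)} independently of the MR construction and isolates the reachability/observability transfer as a standalone structural fact.
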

% Note that $\mathcal{M}$ and $\mathcal{\hat{M}}$  from Theorem \ref{LFR2LPV:theo} are trivially input-output equivalent. Part \textbf{(i)} of Theorem \ref{LFR2LPV:theo} 
% says that in fact they are formaly input-output equivalent, i.e. they will yield the same input-output behavior for uncertainity blocks which do not necessarily
%th:LFPVMinimality arise frol a scheduling signal. 
%Part \textbf{(i)} says that  the transformation from Definition \ref{LFR2LPV} and MR factorization are each others' inverses. 
Parts \textbf{(i)} -- \textbf{(iii)} of Theorem \ref{LFR2LPV:theo} say that the transformation from LPV-LFR to ALPVs preserves
 input-output equivalence,  minimality and isomorphism. Finally, Part \textbf{(iv)} says that,
if attention is restricted to minimal LFRs, then the 
 transformations from Definition \ref{LFR2LPV} and Definition \ref{LPV2LFR} are each others' inverses, if isomorphic models are viewed as equals.

 Theorem~\ref{LFR2LPV:theo} and Theorem~\ref{th:LFPVMinimality} have several consequences.
 \begin{Corollary}
 \label{col2}
   Any minimal LFR which is formally input-output equivalent to an LPV-LFR is also an LPV-LFR. 
 \end{Corollary}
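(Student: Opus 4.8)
The plan is to manufacture, out of the given LPV-LFR, a \emph{minimal} LFR that is at once formally input-output equivalent to it and manifestly of LPV-LFR form, and then to push this structural property onto $\mathcal{M}$ via the isomorphism guaranteed by Theorem~\ref{lfr:theo}. Write $\mathcal{N}$ for the LPV-LFR to which the minimal LFR $\mathcal{M}$ is formally input-output equivalent, so $Y_{\mathcal{M}}=Y_{\mathcal{N}}$. Since the formal input-output map of $\mathcal{N}$ lives on the free monoid over $\{1,\ldots,d\}$, the alphabet size $d>1$ can be read off from $Y_{\mathcal{N}}=Y_{\mathcal{M}}$, so $\mathcal{M}$ and $\mathcal{N}$ carry the same $d$.

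First I would cross over to the ALPV side. Let $\Sigma=\Sigma_{\mathcal{N}}$ be the ALPV associated with $\mathcal{N}$ by Definition~\ref{LFR2LPV}, and use Theorem~\ref{min:theo} to replace it by an input-output equivalent minimal ALPV $\Sigma_{\min}$; because input-output equivalence of ALPVs fixes the scheduling dimension, minimization leaves $\NP=d-1$ unchanged. Applying MR factorization to $\Sigma_{\min}$ produces an LPV-LFR $\mathcal{M}_0$ with $d=\NP+1$ blocks, hence of the same alphabet size, and by Theorem~\ref{th:LFPVMinimality}(1) $\mathcal{M}_0$ is minimal since $\Sigma_{\min}$ is. By construction $\mathcal{M}_0$ is an LPV-LFR. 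To see that $\mathcal{M}_0$ is formally input-output equivalent to $\mathcal{N}$, note that MR factorization is a special case of Definition~\ref{LPV2LFR}, so the reverse map returns the original ALPV, i.e. $\Sigma_{\mathcal{M}_0}=\Sigma_{\min}$, which is input-output equivalent to $\Sigma=\Sigma_{\mathcal{N}}$ by construction. Theorem~\ref{LFR2LPV:theo}\textbf{(ii)} then yields $Y_{\mathcal{M}_0}=Y_{\mathcal{N}}=Y_{\mathcal{M}}$.

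Thus $\mathcal{M}$ and $\mathcal{M}_0$ are two minimal LFRs with the same $d$ and identical formal input-output maps, so Theorem~\ref{lfr:theo} declares them isomorphic. By the very definition of LFR isomorphism this forces them to share the partition $\{n_i\}_{i=1}^{d}$ and to be related by a block-diagonal nonsingular $T=\diag{[T_1,\ldots,T_d]}$ with $T_i\in\mathbb{R}^{n_i\times n_i}$. Denoting by $\{F^{0}_{i,j}\}$ the $A$-blocks of $\mathcal{M}_0$, the $(i,j)$ block of the $A$-matrix of $\mathcal{M}$ equals $T_i F^{0}_{i,j} T_j^{-1}$; since $\mathcal{M}_0$ is an LPV-LFR we have $F^{0}_{i,j}=0$ whenever $i>1$ and $j>1$, whence the corresponding blocks of $\mathcal{M}$ vanish as well. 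Together with $d>1$ this shows $\mathcal{M}$ is an LPV-LFR.

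The one point demanding care is the passage through Theorem~\ref{lfr:theo}: one must ensure $\mathcal{M}$ and $\mathcal{M}_0$ are genuinely comparable as LFRs, namely that they share both the alphabet size $d$ and the block sizes $\{n_i\}$, so that a \emph{block-diagonal} isomorphism between them exists. The equality of $d$ comes from the common domain of the formal input-output maps, while the equality of the $\{n_i\}$ is delivered automatically by the isomorphism statement of Theorem~\ref{lfr:theo}; it is precisely the block-diagonal form of that isomorphism that preserves the defining zero-pattern of LPV-LFRs, and this is the crux on which the whole argument turns.
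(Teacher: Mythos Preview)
Your proof is correct and follows essentially the same route as the paper: pass from the LPV-LFR $\mathcal{N}$ to its associated ALPV, minimize on the ALPV side, return via MR factorization to obtain a minimal LPV-LFR $\mathcal{M}_0$, and then use Theorem~\ref{lfr:theo} to conclude that the given minimal $\mathcal{M}$ is isomorphic to $\mathcal{M}_0$ and hence inherits the LPV-LFR zero pattern through the block-diagonal similarity. You are in fact a bit more explicit than the paper about why the block-diagonal form of the isomorphism forces $F_{i,j}=0$ for $i,j>1$, and your invocation of Theorem~\ref{LFR2LPV:theo}\textbf{(ii)} to establish $Y_{\mathcal{M}_0}=Y_{\mathcal{N}}$ is arguably cleaner than the paper's appeal to part~\textbf{(iv)}.
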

  In other words, Corollary \ref{col2} states that the class of LPV-LFRs are closed under minimization, \emph{i.e.}, when working with
  LPV-LFRs we can always assumed they are minimal. 
  
  Let us say that two LPV-LFRs are \emph{input-output equivalent}, if their associated ALPVs, as defined in Definition \ref{LFR2LPV}, are
  input-output equivalent. Intuitively, if two LPV-LFRs are input-output equivalent, then their input-output behavior is the same
  for any $\Delta$ of the form \eqref{DeltaScheduling}. From Theorem \ref{LFR2LPV:theo} and Theorem \ref{th:LFPVMinimality}
  we can deduce the following.
  \begin{Corollary}
 \label{col1}
   Two LPV-LFRs are input-output equivalent, if and only if they are formally input-output equivalent.
 \end{Corollary}
 That is, for LPV-LFRs, their formal input-output map uniquely determines the input-output function of the associated ALPV!
 In fact,
 the proof of Theorem \ref{LFR2LPV:theo} and Theorem \ref{th:LFPVMinimality} allows us to characterize the formal 
 input-output maps of LPV-LFRs.
 \begin{Theorem}\label{th:LFR_LPV_equivalence}
	An LFR $\mathcal{M}$ is formally input-output equivalent to an LPV-LFR, if and only if
	$Y_{\mathcal{M}}(s)=0$ for all $s \in \mathcal{X}^{*}$, such that 
	$s$ not of the form in $i_11i_2 \cdots 1 i_k1$ for some $i_1,\ldots,i_k \in \{1,\ldots,d\}$, $k > 0$. 

        %If $\mathcal{M}$ is minimal, and the condition above holds, then $\mathcal{M}$ is an LPV-LFR. 
\end{Theorem}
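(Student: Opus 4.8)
The statement is a fairly direct consequence of the block-sparsity of LPV-LFRs together with the realization machinery already assembled, and the plan is to prove the two implications separately. The one fact used throughout is that, in the canonical partitioning of an LPV-LFR, the only possibly nonzero blocks $F_{i,j}$ are those with $i=1$ or $j=1$.

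\emph{Forward direction.} Suppose $\mathcal{M}$ is formally input-output equivalent to an LPV-LFR $\mathcal{M}'$, so $Y_{\mathcal{M}}=Y_{\mathcal{M}'}$. Fix a word $s=i_1\cdots i_k$ outside the prescribed form; the operative meaning of this (verified in the last paragraph) is that $s$ has two adjacent symbols $i_{\ell+1},i_\ell$ both greater than $1$, i.e. two scheduling indices not separated by the shift index $1$. Reading $Y_{\mathcal{M}'}(s)=H_{i_k}F_{i_k,i_{k-1}}\cdots F_{i_2,i_1}G_{i_1}$ off the canonical partitioning of $\mathcal{M}'$, this product contains the factor $F_{i_{\ell+1},i_\ell}$ with both indices exceeding $1$, which vanishes because $\mathcal{M}'$ is an LPV-LFR. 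Hence $Y_{\mathcal{M}}(s)=Y_{\mathcal{M}'}(s)=0$, which is exactly the claimed condition. This direction is immediate; the substance lies in the converse.

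\emph{Reverse direction.} Assume $Y_{\mathcal{M}}(s)=0$ for every $s$ outside the prescribed form. First I would make explicit the dictionary already implicit in the proof of Theorem~\ref{th:LFPVMinimality} between the admissible words and the generalized Markov parameters $C_{a}A_{b_1}\cdots A_{b_r}B_{c}$ and feedthrough terms $D_j$ of an ALPV: substituting $A_0=F_{1,1}$, $A_j=F_{1,j+1}F_{j+1,1}$, $B_0=G_1$, $B_j=F_{1,j+1}G_{j+1}$, $C_0=H_1$, $C_j=H_{j+1}F_{j+1,1}$, $D_j=H_{j+1}G_{j+1}$ (for $j\ge 1$) shows that, for any LPV-LFR, each generalized Markov parameter of its associated ALPV equals the formal input-output value at a unique admissible word, and conversely every admissible word arises this way. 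Using this dictionary I read off from the admissible values of $Y_{\mathcal{M}}$ a candidate family of ALPV generalized Markov parameters and invoke ALPV realization theory (Theorem~\ref{min:theo} and its references) to obtain an ALPV $\Sigma$ realizing exactly this family. I would then transform $\Sigma$ to an LPV-LFR $\mathcal{M}'$ by MR factorization. By the dictionary, $Y_{\mathcal{M}'}$ agrees with $Y_{\mathcal{M}}$ on all admissible words, whereas on the remaining words $Y_{\mathcal{M}'}$ vanishes by the forward computation applied to $\mathcal{M}'$ and $Y_{\mathcal{M}}$ vanishes by hypothesis; hence $Y_{\mathcal{M}'}=Y_{\mathcal{M}}$ and $\mathcal{M}$ is formally input-output equivalent to the LPV-LFR $\mathcal{M}'$.

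\emph{Main obstacle.} The delicate step is the realizability claim in the converse: one must show that the admissible values of $Y_{\mathcal{M}}$, reinterpreted through the dictionary, genuinely form the Markov-parameter family of some finite-dimensional ALPV, rather than an arbitrary assignment. The natural way to secure this is to first replace $\mathcal{M}$ by a minimal (reachable and observable) LFR via Theorem~\ref{lfr:theo} without changing $Y_{\mathcal{M}}$, and then argue that the Hankel matrix associated with the induced ALPV Markov-parameter family has finite rank, bounded in terms of $\dim\mathcal{M}$, precisely because $Y_{\mathcal{M}}$ admits a finite linear representation of that size; finite Hankel rank is exactly the realizability criterion supplied by ALPV realization theory. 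A secondary point requiring care is the bookkeeping of the combinatorial ``form'': one must verify that ``not of the form $i_1 1 i_2\cdots 1 i_k 1$'' coincides operationally with ``some two adjacent scheduling indices are not separated by $1$'', paying attention to the distinguished role of the shift index $1$ versus the scheduling indices $2,\dots,d$ and to the short boundary words (the empty word and single symbols) that correspond to the feedthrough terms $D_0$ and $D_j$.
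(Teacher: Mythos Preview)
Your forward direction matches the paper's (which dismisses it as ``clear''). For the converse, however, the paper takes a more direct route. Rather than extracting Markov parameters, invoking ALPV realization theory to build an ALPV $\Sigma$, and then converting $\Sigma$ back to an LPV-LFR, the paper first replaces $\mathcal{M}$ by a minimal LFR (as you also do) and then shows that this minimal LFR is \emph{itself} an LPV-LFR: for $r,q>1$, every block entry of $\mathcal{O}^r_l(\mathcal{M})\,F_{r,q}\,\mathcal{R}^q_l(\mathcal{M})$ is a value $Y_{\mathcal{M}}(s)$ at a word containing the forbidden adjacent pair $rq$, hence vanishes by hypothesis; since the block observability and reachability matrices are full column and row rank by minimality, $F_{r,q}=0$. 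This sidesteps entirely the realizability obstacle you correctly flag, and it yields Corollary~\ref{col2} (any minimal LFR formally equivalent to an LPV-LFR \emph{is} an LPV-LFR) as an immediate byproduct rather than as a separate argument.

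Your route is nonetheless correct. The realizability step you worry about can in fact be discharged without any abstract Hankel-rank criterion: the Markov parameters you read off via the dictionary are already realized by the ALPV with state space $\mathbb{R}^{n_1}$ and matrices $A_0=F_{1,1}$, $B_0=G_1$, $C_0=H_1$, and $A_j=F_{1,j+1}F_{j+1,1}$, $B_j=F_{1,j+1}G_{j+1}$, $C_j=H_{j+1}F_{j+1,1}$ for $j\ge 1$, built directly from $\mathcal{M}$'s canonical partitioning (this is essentially Definition~\ref{LFR2LPV} extended to a general LFR). So your detour through realization theory works, but it is a detour; the paper's argument is shorter and proves slightly more.
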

 This allows us to determine if an arbitrary LFR  can be represented as a LPV-LFR. 

%\section{Application: identifiability of LPV-LFRs}
The results  of Theorem \ref{th:LFPVMinimality} allow us to conclude that identifiability of
ALPV and LFRs parameterizations is in fact equivalent.
% provided that one restricts attention to minimal ALPV and LFR models. 
%In turn, this equivalence and the existing results for identifiability of ALPV models from \cite{Alkhoury15} allow us to characterize identifiability of LPV-LFR models. 
%Note that if one uses non-minimal models, then it is possible to come up with a parametrizion of LPV-LFR models and ALPV models which describe the same
%set of input-output behaviors, but one of which is identifiable, and the other one is not. % transformation from Definition \ref{LPV2LFR}. 
%Finally, we present the applications of the results above to identifiability of LPV-LFRs. 
In order to present the results formally, we define the concepts of LPV-LFR parameterizations and their structural identifiability.
To this end, let us fix integers $d, m$ and $p$, and 
denote by $\mathcal{LFR}(d,m,p,\{n_i\}_{i=1}^{d})$ the set of all LPV-LFRs of the form \eqref{LFRform}.
% That is, $\mathcal{LFR}(d,m,p,\{n_i\}_{i=1}^{d})$ is the
%set of all LPV-LFRs with $m$ inputs, $p$ outputs, $d$ scheduling variables, $n$ states and with the uncertainity blocks $\Delta(p)$ being of the form
%$\Delta(p)=diag{[p_1I_{n_2}, \ldots, p_dI_{n_{d+1}}]}$. 
%$\mathcal{LPV}(\NP,\NX,\NU,\NY)$ is the set of all ALPV modes with $\NX$ states, $\NY$
%outputs, $\NU$ inputs and $\NP$ scheduling parameters. 
\begin{Definition}[Parameterizations]
Let $\Theta \subseteq \mathbb{R}^{\NTH}$ be the space of parameters.
An LPV-LFR parametrization is a function $\mathbf{M}: \Theta \rightarrow \mathcal{LFR}(d,m,p,\{n_i\}_{i=1}^{d})$. 
\end{Definition}
We will say that an LPV-LFR parametrization $\mathbf{M}$ is \emph{structurally identifiable}, if for any two distinct parameter values $\theta_1, \theta_2 \in \Theta$, $\theta_1 \ne \theta_2$, $\mathbf{M}(\theta_1)$ and $\mathbf{M}(\theta_2)$ are not input-output equivalent. We say that the LPV-LFR parametrization $\mathbf{M}$ is
\emph{formally structurally identifiable}, if for any two distinct parameter values $\theta_1, \theta_2 \in \Theta$, $\theta_1 \ne \theta_2$, $\mathbf{M}(\theta_1)$ and $\mathbf{M}(\theta_2)$ are not formally input-output equivalent. 
%Finally, $\mathbf{M}$ is called \emph{structurally minimal}, if for every parameter value $\theta \in \Theta$,
%$\mathbf{M}(\theta)$ is a minimal LFR. Note that difference between structural identifiability and formal structural identifiability of LPV-LFR parametrizations. 
Structural identifiability means that the parameter values can be uniquely determined by observing the output of the underlying system for some input and for some choice of the
uncertainity block $\Delta$ of the form \eqref{DeltaScheduling}, 
which corresponds to a choice of the scheduling variables. In contrast, formal structural identifiability means that it is possible to
determine the parameter value by observing the output for some input and some choice of $\Delta$, but the chosen $\Delta$ need not arise from a scheduling variable.
%For example, if $d=1$, $\Delta$ could be of the form $\Delta=\delta_1 I_{n_2}$, where $\delta_2$ is a transfer function of a stable LTI system. 
It is not \emph{a-priori} clear that these two identifiability notions are equivalent. 
%Note that structural identifiability fo LPV-LFR parametrizations is not a simple consequence of identifiability of the their LTI parts. In fact, 
%LPV-LFR parametrizations can be identifiable without their LTI component being identifiable, see the example below.
%\begin{Example}
%\end{Example}
%using Corollary \ref{col1}, we can show the following.
%\begin{Corollary}[Identifiability of LPV-LFRs]
%\label{ident:theo1}
% An LPV-LFR parametrization is strucural identifiably if and only if it is formaly structuraly identifiable.
%\end{Corollary}
% The significance of Corollary \ref{ident:theo1} is that it tells us that 

In fact, we can show that structural identifiability of ALPV models, and (formal) structural identifiability of LPV-LFRs are equivalent.
To this end, recall from \cite{Alkhoury15} the notions of parametrization, structural identifiability and minimality for ALPVs.
Denote by $\mathcal{LPV}(\NP,\NX,\NU,\NY)$ the set of all ALPV models of the form \eqref{equ:FALPVSystem}. 
An ALPV parametrization is a function $\mathbf{\Sigma}:\Theta \rightarrow \mathcal{LPV}(\NP,\NX,\NU,\NY)$.
An ALPV parametrization $\mathbf{L}$ is \emph{structurally identifiable}, if for any two distinct parameter values $\theta_1, \theta_2 \in \Theta$, $\theta_1 \ne \theta_2$, the input-output maps of $\mathbf{L}(\theta_1)$ and $\mathbf{L}(\theta_2)$ are not equal.
% and $\mathbf{L}$ is \emph{structuraly minimal}, if for all $\theta \in \Theta$, $\mathbf{L}(\theta)$ is minimal. 
We say that an LPV-LFR parametrization $\mathbf{M}$ \emph{originates from an ALPV parametrization $\mathbf{L}$ by MR factorization}, if
 for every $\theta \in \Theta$, $\mathbf{M}(\theta)$ is an LPV-LFR which is calculated from the ALPV
 $\mathbf{L}(\theta)$ by using an MR factorization. 
 Likewise, we say that an ALPV parametrization $\mathbf{L}$ \emph{arises from the LPV-LFR parametrization $\mathbf{M}$}, if for every $\theta \in \Theta$,  $\mathbf{L}(\theta)$ is the ALPV associated with $\mathbf{M}(\theta)$, defined in Definition \ref{LFR2LPV}.
 
 \begin{Theorem}[Identifiability of LPV-LFR and ALPVs]
 \label{ident:theo}
    Consider an ALPV parametrization $\mathbf{L}$ and a LPV-LFR parametrization $\mathbf{M}$.
     \begin{enumerate}
     \item $\mathbf{M}$ is structurally identifiable $\iff$ $\mathbf{M}$ is formally structurally identifiable.
     \item If $\mathbf{M}$ originates from $\mathbf{L}$ by an MR factorization, then,
           $\mathbf{L}$ is structurally identifiable $\iff$ $\mathbf{M}$ is structurally identifiable.
     \item If $\mathbf{L}$ arises from $\mathbf{M}$, then, $\mathbf{L}$ is structurally
           identifiable $\iff$  $\mathbf{M}$ is structurally identifiable. 
     \end{enumerate}
\end{Theorem}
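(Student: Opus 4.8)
The plan is to reduce all three statements to biconditional equivalences that were already established, and then to exploit a single structural observation: structural identifiability is a statement quantified over all pairs $(\theta_1,\theta_2)$ with $\theta_1 \ne \theta_2$ asserting \emph{non}-equivalence of the associated models. Consequently, if two notions of equivalence agree on every such pair of models, then the corresponding identifiability notions agree as well, since negation of a biconditional is again a biconditional. The only real care needed is to unwind the definitions so that the correct notion of input-output equivalence is invoked at each step, and to verify that every cited result is a genuine ``if and only if,'' so that the passage from equivalence to non-equivalence is legitimate.

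First I would treat Part~1. By Corollary~\ref{col1}, for any two LPV-LFRs the notions of input-output equivalence and formal input-output equivalence coincide. Fixing an arbitrary pair $\theta_1 \ne \theta_2$ and applying this to $\mathbf{M}(\theta_1)$ and $\mathbf{M}(\theta_2)$, these two LFRs are input-output equivalent if and only if they are formally input-output equivalent; taking negations, they are \emph{not} input-output equivalent if and only if they are \emph{not} formally input-output equivalent. Since this holds for every pair, $\mathbf{M}$ is structurally identifiable if and only if it is formally structurally identifiable. For Part~2, write $\Sigma_j := \mathbf{L}(\theta_j)$ and $\mathcal{M}_j := \mathbf{M}(\theta_j)$, so that each $\mathcal{M}_j$ is obtained from $\Sigma_j$ by an MR factorization. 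By Theorem~\ref{th:LFPVMinimality}(2), $\Sigma_1$ and $\Sigma_2$ are input-output equivalent if and only if $\mathcal{M}_1$ and $\mathcal{M}_2$ are formally input-output equivalent; by Corollary~\ref{col1} this is in turn equivalent to $\mathcal{M}_1$ and $\mathcal{M}_2$ being input-output equivalent as LPV-LFRs. Chaining these biconditionals, negating, and quantifying over all $\theta_1 \ne \theta_2$ gives that $\mathbf{L}$ is structurally identifiable if and only if $\mathbf{M}$ is.

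Part~3 follows almost directly from the definition of input-output equivalence of LPV-LFRs. Since $\mathbf{L}$ arises from $\mathbf{M}$, we have $\mathbf{L}(\theta) = \Sigma_{\mathbf{M}(\theta)}$ for every $\theta$, and by definition two LPV-LFRs are input-output equivalent exactly when their associated ALPVs (Definition~\ref{LFR2LPV}) are input-output equivalent. Hence $\mathbf{M}(\theta_1)$ and $\mathbf{M}(\theta_2)$ are input-output equivalent if and only if $\mathbf{L}(\theta_1)$ and $\mathbf{L}(\theta_2)$ are, for every pair, and the two identifiability statements coincide termwise. Alternatively, Part~3 can be obtained from Part~2 together with the fact, recorded in Theorem~\ref{LFR2LPV:theo}(iv), that the MR factorization of $\Sigma_{\mathbf{M}(\theta)}$ is formally input-output equivalent to $\mathbf{M}(\theta)$.

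The main obstacle here is not any single computation but rather careful bookkeeping of definitions. One must be vigilant that ``input-output equivalent'' applied to an LPV-LFR always refers to equivalence of the associated ALPV, and never silently to the formal input-output map, and that each invoked result is used as a biconditional so that the implication from equivalence to non-equivalence, and hence to structural identifiability, remains valid in both directions. Once the three relevant equivalence relations are seen to coincide pairwise via Corollary~\ref{col1} and Theorem~\ref{th:LFPVMinimality}(2), each part is a short chain of already-proved equivalences.
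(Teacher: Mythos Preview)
Your proposal is correct and follows essentially the same approach as the paper's sketch: reduce each identifiability statement to the pairwise coincidence of the relevant equivalence relations via Corollary~\ref{col1} (for Part~1) and the correspondence between ALPV and LPV-LFR input-output behavior (for Parts~2--3), then negate and quantify over $\theta_1\ne\theta_2$. Your argument for Part~2 is slightly more elaborate than necessary---since $\mathbf{M}(\theta)$ is computed from $\mathbf{L}(\theta)$ via Definition~\ref{LPV2LFR}, the paper's remark that $\Sigma_{\mathbf{M}(\theta)}=\mathbf{L}(\theta)$ lets you invoke the \emph{definition} of LPV-LFR input-output equivalence directly, bypassing the detour through Theorem~\ref{th:LFPVMinimality}(2) and Corollary~\ref{col1}---but your route is valid and the added care with ``biconditional then negate'' is well placed.
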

Theorem \ref{ident:theo} implies that in order to identify LPV-LFRs, it is sufficient to identify the 
corresponding ALPVs, and vice versa. In particular, in order to identify LPV-LFR models, it is enough to test them for 
uncertainty blocks  of the form \eqref{DeltaScheduling} which come from scheduling variables.  
Theorem \ref{ident:theo} allows us to use the recent results of \cite{Alkhoury15} to investigate identifiability of LPV-LFRs.

\section{Conclusion}\label{sec:conclusion}
Structural properties of the transformation between ALPV and LFR models are studied. More precisely, minimal, input-output equivalent and identifiable ALPV models are shown to yield minimal, input-output equivalent and identifiable LPV-LFRs respectively, under the condition that the transformation  is performed via a minimal rank factorization. LFR models that can be obtained from ALPV models are characterized using their input-output equivalent input-output maps. In a close future, these equivalence results will allow us to extend system identification solutions for ALPV to LFRs.
%In this paper, we showed that minimal \ALPV models can be converted to minimal \LFPV ones using a minimal rank factorization. We showed that input-output equivalent \ALPV models lead to LFR models which behave in the same manner for any choice of uncertainty block $\Delta$. 
%Indeed, such LFR models have a unique FPS representation because it can be expressed in terms on Markov parameters of the original \ALPV models, which can be uniquely determined from the input-output map of the system.
%we noted out that, certain class of LFR models correspond to \ALPV ones only if their FPS representation has a zero values for certain terms. We also showed that LFR models that correspond to isomorphic \ALPV ones are also isomorphic. \red{REWRITE}

\appendix
\section{Sketches of the proofs}\label{appendix:proofs}    % Each appendix must have a short title.
\begin{proof}[Sketch of the proof of Theorem \ref{th:LFPVMinimality}]
	
	\textit{\textbf {1)}} We will show $\Sigma$ is span-reachable (resp. observable), if and only if $\mathcal{M}$ is
	span-reachable (res. observable).
	%of $\mathcal{R}_{\NX-1}(\Sigma)$ and $\mathcal{O}_{\NX-1}(\Sigma)$ are of full row and column rank.
	We sketch the proof for reachability, observability can be handled in a similar fashion.

	\textit{\textbf{Reachability}}  By induction on $k \in \mathbb{N}$, we can show that
	\begin{equation}
	\label{pf:eq1}
	\begin{split}
	& \IM \mathcal{R}_{k+1}^i(\mathcal{M})=\IM \begin{bmatrix} F_{i,1}\mathcal{R}^1_k(\mathcal{M}), &  G_i \end{bmatrix}, ~ i=2,\ldots,d \\
	& \IM  \mathcal{R}_{k}^1(\mathcal{M}) \subseteq \IM \mathcal{R}_{k}(\Sigma), \\
	& \IM \mathcal{R}_{k}(\Sigma) \subseteq \IM  \mathcal{R}_{2k+1}^1(\mathcal{M}).  \\
	\end{split}
	\end{equation}

	Assume now that $\Sigma$ is span-reachable. Then $\IM \mathcal{R}_{\NX-1}(\Sigma)=\mathbb{R}^{\NX}$, and since by \eqref{pf:eq1},
	$\IM \mathcal{R}_{\NX-1}(\Sigma) \subseteq \IM \mathcal{R}_{2\NX-1}^1(\mathcal{M})$, it follows that
	$\IM \mathcal{R}_{2\NX-1}^1(\mathcal{M})=\mathbb{R}^{\NX}$, \emph{i.e.}, $\rank \mathcal{R}_{2\NX-1}^1(\mathcal{M})=\NX=n_1$.
	Moreover, from the first statement of \eqref{pf:eq1} it follows that
	\[ \mathcal{R}^i_{2\NX} = \begin{bmatrix} F_{1,i}, & G_i \end{bmatrix} \begin{bmatrix} \mathcal{R}_{2\NX-1}^1(\mathcal{M})  & 0 \\ 0 & I_{\NU} \end{bmatrix}. \]
	Since $\rank \begin{bmatrix} F_{1,i}, & G_i \end{bmatrix}=n_i$ , and as $\rank \mathcal{R}_{2\NX-1}^1(\mathcal{M})=\NX=n_1$, 
	$\rank \begin{bmatrix} \mathcal{R}_{2\NX-1}^1(\mathcal{M})  & 0 \\ 0 & I_{\NU} \end{bmatrix}=\NX+\NU$, it follows that
	$\rank \mathcal{R}^i_{2\NX}(\mathcal{M})=n_i$. Moreover, it is easy to see that
	$\rank \mathcal{R}^i_{k}(\mathcal{M}) \le \rank \mathcal{R}^i_{k+1}(\mathcal{M})$ for all $i=1,\ldots,d$. Hence, for
	$k=2\NX$, $\rank \mathcal{R}^i_k(\mathcal{M})=n_i$, $i=1,\ldots,d$, \emph{i.e.}, $\mathcal{M}$ is reachable. 
	
	Conversely, assume that $\mathcal{M}$ is reachable. Then for some $k \ge 0$,
	$\IM \mathcal{R}^1_k(\mathcal{M})=\mathbb{R}^{\NX}$, and from \eqref{pf:eq1} it then follows that 
	$\IM \mathcal{R}_k(\Sigma)=\mathbb{R}^{\NX}$, \emph{i.e.}, $\rank \mathcal{R}_k(\Sigma)=\NX$. Note that in \cite{PM12,PTM15} it was shown that
	$\rank \mathcal{R}_{\NX-1}(\Sigma) \ge \rank \mathcal{R}_k(\Sigma)$ for any $k \ge 0$, and hence 
	$\rank \mathcal{R}_{\NX-1}(\Sigma)=\NX$, \emph{i.e.}, $\Sigma$ is span-reachable.

	\textit{\textbf{2)}}
	Notice that  $Y_{\mathcal{M}}(s)$ equals zero, if $s$ is not a  sequence of the form $i_1\,1\,i_2 \cdots 1\,i_k$,
	$i_1,\ldots,i_k \in \{1,\ldots,k\}$.
	If $s=i_1\,1\,i_2 \cdots 1\,i_k$, then 
	$Y_{\mathcal{M}}(s)=H_{i_k}F_{i_k,1}\dots F_{1,i_1}G_{i_1}=C_{i_k-1}A_{i_{k-1}-1} \cdots A_{i_2-1}B_{i_1-1}$.
	The latter matrix products are precisely the Markov-parameters of $Y_{\Sigma}$ defined in \cite{PM12,PTM15}.
	That is, there is a one-to-one correspondence between $Y_{\mathcal{M}}$ and the Markov-parameters of
	$Y_{\Sigma}$. It was shown in \cite{PM12,PTM15}, that the Markov-parameters of $Y_{\Sigma}$ determine $Y_{\Sigma}$
	uniquely, and vice versa. That is, $\Sigma_1$ and $\Sigma_2$ are input-output equivalent if and only if
	the Markov-parameters of $Y_{\Sigma_1}$ and $Y_{\Sigma_2}$ are the same, and the latter is equivalent to 
	$Y_{\mathcal{M}_1}=Y_{\mathcal{M}_2}$.
	
	%	\begin{align}\label{equ:series_zero_terms_sequence}\dots[i_{k-1}][1][i_{k}][1][i_{k+1}][1]\dots\end{align} 
	%are zeros, \emph{i.e.}, if the term of the series corresponds to a sequence that contain two adjacent $i_t,i_{t-1} $ where $i_t \ne 1$ and $i_{t-1} \ne 1$, then this term is zero. This is simply because we choose the matrix $D_{zw}=0$ while building the LFR-LPV model, and in the canonical partitioning $D_{zw}$ corresponds to the matrices $F_{ij}$ where both $i,j>1$.
	
	%otice also that the terms which correspond to sequences of the mentioned form will be equivalent to Markov parameters of the \ALPV system (see \cite{PM12}), \emph{i.e.}, we will have $C_iB_0$, $C_0B_i$ $C_iA_0B_0$, $C_0A_iB_0$, $C_0A_0B_i$, $C_0A_iA_jB_0$, \emph{etc}., as coefficients of the  formal power series terms, where $A_i, B_i, C_i$ are the matrices of the \ALPV model.
	
	%This connection is very important because, on the one hand, it proves part~2 of Theorem~\ref{th:LFPVMinimality}, \emph{i.e.}, the input-output equivalence between two LFR models that correspond to input-output equivalent \ALPV models, and on the other hand, it characterizes the FPS representations that correspond to \ALPV models as in Theorem~\ref{th:LFR_LPV_equivalence}.
	
	\textit{\textbf{3)}} Left to the reader. 
	%Notice that, if $T$ is and isomorphism between $\Sigma_1$ and $\Sigma_2$, then $T_{\mathcal{M}} = \diag{[T, S]}$ is an isomorphism between $\mathcal{M}_1$ and $\mathcal{M}_2$, where $S$ is a matrix of suitable dimensions.
\end{proof}
\begin{proof}[Sketch of the proof of Theorem \ref{LFR2LPV:theo}]
	\textbf{(i)} 
	It is sufficient to prove that if $\mathcal{M}$ is reachable (resp. observable), then $\Sigma$ is reachable (resp. observable).
	This can be shown in the same way as the corresponding implication in \textbf{(1)} of Theorem \ref{th:LFPVMinimality}.
	More precisely, we can show that \eqref{pf:eq1} holds. Hence, if $\mathcal{M}$ is reachable, then
	$\Sigma$ is span-reachable by the same argument as in the proof of  Theorem \ref{th:LFPVMinimality}. Observability can
	be handled in a similar manner.
	
	\textbf{(ii)} 
	The proof of \textbf{(ii)} is similar to the proof of \textbf{(2)} of Theorem \ref{th:LFPVMinimality}:
	it can be shown that the values of ${Y}_{\mathcal{M}}$ are either zeros or they coincide with the Markov-parameters of
	$Y_{\Sigma}$. Hence, $\mathcal{M}$ and $\mathcal{\tilde{M}}$ are formally input-output equivalent, if and only if 
	the Markov parameters of $\Sigma$ and $\tilde{\Sigma}$ coincide, and by \cite{PM12,PTM15}, the latter is equivalent to
	$\Sigma$ and $\tilde{\Sigma}$ being
	input-output equivalent.
	
	\textbf{(iii)} Easy exercise.
	
	\textbf{(iv)} Note that $\Sigma$ is the ALPV associated with both $\mathcal{M}$ and $\mathcal{\hat{M}}$. Hence, by part \textbf{(ii)}
	$\mathcal{M}$ and $\mathcal{\hat{M}}$ have to be formally input-output equivalent. If $\mathcal{M}$ is minimal, then by
	part \textbf{(i)} so is $\Sigma$, and by Theorem \ref{th:LFPVMinimality}, $\mathcal{\hat{M}}$ is minimal. Since 
	$\mathcal{M}$ and $\mathcal{\hat{M}}$ are both formally input-output equivalent and minimal, then,
	by Theorem \ref{lfr:theo} they are isomorphic. 
	
\end{proof}
\begin{proof}[Proof of Corollary \ref{col2}]
	Let $\mathcal{M}$ be an LPV-LFR, and compute its associated ALPV $\Sigma$. From \cite{PM12,PTM15} it follows that
	$\Sigma$ can be converted to a minimal ALPV $\Sigma_m$ which is input-output equivalent to $\Sigma$. 
	Let us calculate an LPV-LFR $\mathcal{M}_m$ from $\Sigma_m$ using MR factorization. By Theorem \ref{th:LFPVMinimality},
	$\mathcal{M}_m$ is minimal, and by part \textbf{(iv)}, $\mathcal{M}_m$ and $\mathcal{M}$ are formally input-output equivalent.
	Note that $\mathcal{M}_m$ is an LPV-LFR.
	If $\mathcal{M}^{'}$ is any other minimal LFR which is formally input-output equivalent to $\mathcal{M}$, then
	$\mathcal{M}^{'}$ is isomorphic to $\mathcal{M}_m$. It is easy to see that if an LFR satisfies the definition of LPV-LFRs, then
	any LFR which is isomorphic to it will also satisfy the definition of LPV-LFRs.
\end{proof}
\begin{proof}[Proof of Corollary \ref{col1}]
	The corollary is just a reformulation of part \textbf{(ii)} of Theorem \ref{LFR2LPV:theo}.
\end{proof}
\begin{proof}[Proof of Theorem \ref{th:LFR_LPV_equivalence}] 
 It is clear from the definition of an LPV-LFR that if $\mathcal{M}$ is an LPV-LFR, then $Y_{\mathcal{M}}$ satisfies the conditions of the theorem. 
 It is left to show that if $Y_{\mathcal{M}}$ satisfies the conditions of the theorem, then $\mathcal{M}$ is input-output equivalent to an LPV-LFR.
 To this end, we can always assume that $\mathcal{M}$ is minimal, without loss of generality. Otherwise, we can always transform it to an equivalent minimal one. 
We also know that for a minimal LFR, the reachability and observability matrices are full row and column rank respectively.
	We will prove now that if the series has zero terms for the sequences which are not of the form $i_1\,1\,i_2 \cdots 1\, i_k\, 1$, then the matrix blocks $F_{i,j}$ will be zero when both $i,j$ are bigger than $1$, and hence $\mathcal{M}$ is an LPV-LFR. 
	%This structure of LFR corresponds to an \ALPV as it was showed previously in Section~\ref{sec:Connections_minimality_LPV_LFPV}.
	
	For this purpose, we will show that the block matrix $F_{r,q} = 0$, whenever $r > 1$ and $q > 1$. We claim that if $\mathcal{O}_r F_{r,q} \mathcal{C}_q = 0$, where $\mathcal{O}^r=\mathcal{O}^r_{l}(\mathcal{M})$ and $\mathcal{C}_q=\mathcal{R}^q_{l}(\mathcal{M})$ are such that $\rank \mathcal{O}^r=n_r$ and $\rank \mathcal{C}^r=n_r$. Then, $F_{r,q}=0$ due to the fact that $\mathcal{R}^q_{l}(\mathcal{M})$ and $\mathcal{O}^r_{l}(\mathcal{M})$ are full row and column rank respectively.
	
	Let us take the row $H_{i_k}F_{i_k,i_{k-1}}\dots F_{i_1,r}$ of $\mathcal{O}^r_{l}(\mathcal{M})$, and the column $F_{q,j_{k-1}} \dots F_{j_1,j_0}G_{j_0}$ of $\mathcal{R}^q_{l}(\mathcal{M})$, then,
	\begin{align}
	H_{i_k}F_{i_k,i_{k-1}}\dots F_{i_1,r} F_{r,q}F_{q,j_{k-1}} \dots F_{j_1,j_0}G_{j_0}\nonumber \\
	=\mathcal{S}(i_ki_{k-1}\dots i_1\,\,r\,\,q\,\,j_{k-1}\dots j_0)  = 0 .
	\end{align}
	This means that $\mathcal{O}^r_{n-1}(\mathcal{M}) F_{r,q} \mathcal{R}^q_{n-1}(\mathcal{M}) = 0$, \emph{i.e.},  $F_{r,q} = 0$.
\end{proof}
\begin{proof}[Sketch of the proof of Theorem \ref{ident:theo}]
	\textbf{(i)}
	The statement follows by noticing that for any $\theta_1,\theta_2 \in \Theta$,  due to Corollary \ref{col1},
	$\mathbf{M}(\theta_1)$ and $\mathbf{M}(\theta_2)$ are formally input-output equivalent 
	if and only if they are input-output equivalent.
	
	\textbf{(ii)}-\textbf{(iii)}
	Assume that $\mathbf{M}$ originates from $\mathbf{L}$  by MR factorization, or  $\mathbf{L}$ originates from $\mathbf{M}$.
	In both case, for any $\theta_1,\theta_2 \in \Theta$, $\mathbf{L}(\theta_1)$ and $\mathbf{L}(\theta_2)$ are input-output equivalent, if and only if $\mathbf{M}(\theta_1)$ and $\mathbf{M}(\theta_2)$ are input-output equivalent. Both statements of the theorem then follow from the definition of structural identifiability for $\mathbf{M}$ and $\mathbf{L}$. 
\end{proof}
\bibliographystyle{plain}
\bibliography{Bibliography}

\end{document}